\pgfplotsset{compat=1.16}  %
\newtheorem{theorem}{Theorem}[section]
\newtheorem{fact}[theorem]{Fact}
\newtheorem{remark}[theorem]{Remark}
\definecolor{Gred}{RGB}{219, 50, 54}
\definecolor{Ggreen}{RGB}{60, 186, 84}
\definecolor{Gblue}{RGB}{72, 133, 237}
\definecolor{Gyellow}{RGB}{247, 178, 16}
\definecolor{ToCgreen}{RGB}{0, 128, 0}
\definecolor{myGold}{RGB}{231,141,20}
\definecolor{myBlue}{rgb}{0.19,0.41,.65}
\definecolor{myPurple}{RGB}{175,0,124}
\def\ddefloop#1{\ifx\ddefloop#1\else\ddef{#1}\expandafter\ddefloop\fi}
\def\ddef#1{\expandafter\def\csname #1\endcsname{\ensuremath{\mathbb{#1}}}}
\def\ddef#1{\expandafter\def\csname c#1\endcsname{\ensuremath{\mathcal{#1}}}}
\def\ddef#1{\expandafter\def\csname b#1\endcsname{\ensuremath{\bm #1}}}
\newcommand{\set}[1]{\{{#1}\}}
\newcommand{\Ad}{ad\xspace}
\newcommand{\mechanism}{\cM}
\newcommand{\indmechanism}{\mechanism^{\mathrm{ind}}}
\newcommand{\function}{\mathcal{F}}
\newcommand{\adversary}{\cA}
\newcommand{\adversaryDistribution}{\mathbb{A}}
\newcommand{\QuerySet}{\cQ}
\newcommand{\QuerySetU}{\cQ_{\cU}}
\newcommand{\ResponseSet}{\cR}
\newcommand{\StateSet}{\cS}
\newcommand{\DataSets}{\cD}
\newcommand{\dummyImpression}{x_\bot}
\newcommand{\dummySharedStorage}{y_\bot}
\newcommand{\RID}{\cI}
\newcommand{\aggRepKeys}{\cK}
\newcommand{\activeSources}{\cS}
\newcommand{\query}{q}
\newcommand{\mechState}{S}
\newcommand{\filter}{\phi}
\newcommand{\algcomment}[1]{\hfill {\em \textcolor{black!60}{[#1]}}}
\newcommand{\timeWindowSet}{\cT}
\newcommand{\deviceSet}{\cU}
\newcommand{\device}{u}
\newcommand\xleftrightarrow[2][]{%
	\ext@arrow 9999{\longleftrightarrowfill@}{#1}{#2}}
\newcommand\longleftrightarrowfill@{%
	\arrowfill@\leftarrow\relbar\rightarrow}
\newcommand{\transcript}[2]{\mathrm{IT}\left(#1 : #2\right)}
\newcommand{\DLap}{\mathsf{DLap}}
\newcommand{\MSR}{\mechanism_{\mathrm{SR}}}
\newcommand{\MER}{\mechanism_{\mathrm{ER}}}
\newcommand{\eps}{\varepsilon}
\newcommand{\divergence}[1]{\mathsf{R}_{#1}}
\newcommand{\contribBudget}{\Lambda_1}%
\newcommand{\sparsityBudget}{\Lambda_0}%
\newcommand{\halt}{\text{\raisebox{-1pt}{\includegraphics[height=2.8mm]{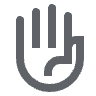}}}}
\newcommand{\thumbsup}{\text{\raisebox{-0.5pt}{\includegraphics[height=2.8mm]{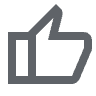}}}}
\newcommand{\thumbsdown}{\text{\raisebox{-1.5pt}{\includegraphics[height=2.8mm]{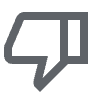}}}}
\newcommand{\AggregationService}{\textsf{AggregationService}\xspace}
\newcommand{\ARASRClient}{\textsf{ARA-SR-Client}\xspace}
\newcommand{\ARAEventClient}{\textsf{ARA-Event-Client}\xspace}
\newcommand{\PAAClient}{\textsf{PAA-SR-Client}\xspace}
\newcommand{\rroutput}{\mathcal{O}}
\newcommand{\rroutputvalid}{\mathcal{O}}
\newcommand{\randomizedoutput}{s^*}
\newcommand{\IRR}{\textsf{InteractiveRandomizedResponse}\xspace}
\newcommand{\sourceId}{{\texttt{\small srcId}}\xspace}
\newcommand{\destination}{{\texttt{\small dest}}\xspace}
\newcommand{\expDate}{{\texttt{\small expDate}}\xspace}
\newcommand{\sourceFilters}{{\texttt{\small srcFilt}}\xspace}
\newcommand{\sourceKey}{{\texttt{\small srcKey}}\xspace}
\newcommand{\triggerId}{{\texttt{\small trigId}}\xspace}
\newcommand{\triggerFilters}{{\texttt{\small trigFilt}}\xspace}
\newcommand{\triggerKey}{{\texttt{\small trigKey}}\xspace}
\newcommand{\triggerValue}{{\texttt{\small trigValue}}\xspace}
\newcommand{\triggerData}{{\texttt{\small trigData}}\xspace}
\newcommand{\triggerDataSet}{\mathsf{TD}}
\newcommand{\triggerSpecification}{{\texttt{\small trigSpec}}\xspace}
\newcommand{\maximalNumberOfReports}{{\texttt{\small maxRep}}\xspace}
\begin{document}

\title{
    On the Differential Privacy and Interactivity of Privacy Sandbox Reports
}

    \author{Badih Ghazi}
    \affiliation{%
    	\institution{Google}
    	\city{Mountain View}
    	\state{California}
    	\country{USA}
    }
    \email{badihghazi@gmail.com}
    \authornote{Alphabetical author order.}
    
    \author{Charlie Harrison}
    \affiliation{%
    	\institution{Google}
    	\state{Texas}
    	\country{USA}
    }
    \email{csharrison@google.com}
    \authornotemark[1]
    
    \author{Arpana Hosabettu}
    \affiliation{%
    	\institution{Google}
    	\city{Mountain View}
    	\state{California}
    	\country{USA}
    }
    \email{arpanah@google.com}
    \authornotemark[1]
    
    \author{Pritish Kamath}
    \affiliation{%
    	\institution{Google}
    	\city{Mountain View}
    	\state{California}
    	\country{USA}
    }
    \email{pritishk@google.com}
    \authornotemark[1]
    
    \author{Alexander Knop}
    \affiliation{%
    	\institution{Google}
    	\city{New York City}
    	\state{New York}
    	\country{USA}
    }
    \email{alexanderknop@google.com}
    \authornotemark[1]
    
    \author{Ravi Kumar}
    \affiliation{%
    	\institution{Google}
    	\city{Mountain View}
    	\state{California}
    	\country{USA}
    }
    \email{ravi.k53@gmail.com}
    \authornotemark[1]
    
    \author{Ethan Leeman}
    \affiliation{%
    	\institution{Google}
    	\city{Cambridge, Massachusetts}
    	\country{USA}
    }
    \email{ethanleeman@google.com}
    \authornotemark[1]
    
    \author{Pasin Manurangsi}
    \affiliation{%
    	\institution{Google}
    	\city{Bangkok}
    	\country{Thailand}
    }
    \email{pasin@google.com}
    \authornotemark[1]
    
    \author{Mariana Raykova}
    \affiliation{%
    	\institution{Google}
    	\city{New York City}
    	\state{New York}
    	\country{USA}
    }
    \email{marianar@google.com}
    \authornotemark[1]
    
    \author{Vikas Sahu}
    \affiliation{%
    	\institution{Google}
    	\city{Mountain View}
    	\state{California}
    	\country{USA}
    }
    \email{vikassahu@google.com}
    \authornotemark[1]
    
    \author{Phillipp Schoppmann}
    \affiliation{%
        \institution{Google}
    	\city{New York City}
    	\state{New York}
    	\country{USA}
    }
    \email{schoppmann@google.com}
    \authornotemark[1]
    
    \renewcommand{\shortauthors}{Ghazi et al.}

\begin{abstract}
The Privacy Sandbox initiative from Google includes APIs for enabling privacy-preserving advertising functionalities as part of the effort around limiting third-party cookies.
In particular, the Private Aggregation API (PAA) and the  Attribution Reporting API (ARA)
can be used for ad measurement while providing different guardrails for safeguarding user privacy, 
including a framework for satisfying differential privacy (DP).
In this work, we provide an abstract model for analyzing the privacy of these APIs and show that they satisfy a formal DP guarantee
under certain assumptions. 
Our analysis handles the case where both the queries and database can change interactively based on previous responses from the API.
\end{abstract}

    \keywords{Ads, Privacy Sandbox, Aggregation Service, Differential Privacy, Individual Differential Privacy, Key Discovery, Requerying}
    
    \maketitle %

\section{Introduction}
Third-party cookies have been a cornerstone of online advertising for more than two decades. 
They are small text files that are stored on a user's computer by websites other than the one
they are currently visiting, allowing websites to track users across the internet and gather
data about their browsing habits, thus enabling advertisers and publisher to measure performance of
ad campaigns.
However, in recent years, growing privacy concerns have led major web browsers to take action: 
both Apple's Safari~\cite{safari3pcd} and Mozilla's Firefox~\cite{mozilla3pcd} deprecated third-party cookies in
$2019$ and $2021$, respectively; Google has made an informed choice proposal to explicitly ask
users to choose whether they want
to disable or enable third-party cookies~\cite{privacySandboxUpdate}. 
This marks a significant change in the online advertising landscape,
increasing the need for new solutions that prioritize user privacy.

In addition to the informed choice proposal, Google has led the Privacy Sandbox~\cite{privacySandbox} initiative,
which includes a set of privacy-preserving technologies aimed at replacing third-party cookie-based ad measurement. 
Two key components of this initiative are the Private Aggregation API (PAA)~\cite{privateAggregationAPI}
and the Attribution Reporting API (ARA)~\cite{attributionReportingAPI},  
which seek to provide advertisers with \emph{reports} that yield insights into the
performance of ad campaigns while protecting user privacy. 
This is already rolled out and activated on roughly $3.5\%$ and $22.1\%$ of all page 
loads in Chrome respectively as of February $1$, $2025$~\cite{PAAStat,ARAStat}.
These APIs offer various privacy guardrails, including a framework~\cite{privacyInAra}
for satisfying differential privacy (DP)~\cite{DworkMNS06,dwork2006our}.

Recall that an advertising (\Ad, for short) campaign is a collection of impressions, each of those 
indicating a user interaction: either a user viewing an ad or clicking on one. 
The websites on which the \Ad{}s are displayed, and possibly clicked, are referred to as \emph{publishers}. 
The goal of an \Ad campaign is to drive useful actions on the advertiser website (e.g., purchases);
these events are usually referred to as \emph{conversions}.
Therefore, goal of measurement for advertiser is to learn about campaign performance to answer questions
like how many users did this \Ad campaign reach or how many converted as a result of this \Ad.

A main component of ARA and PAA are the  \emph{summary reports}~\cite{summaryReportsDev}, which enable
the estimation of aggregate \Ad metrics 
such as the average number of conversions that can be attributed to a certain \Ad campaign~\cite{summaryReportsUseCases},
or the reach of the \Ad campaign~\cite{uniqueReach, whitepaperReach}.  
These reports are usually sliced by some parameters called \emph{keys} 
(e.g., the ad-tech could slice by geography or a device type). 
Summary reports have been proposed in two flavors depending on whether the set of keys over which
the aggregates are sliced is pre-specified or not; the latter is referred to as 
\emph{key discovery}~\cite{keyDiscovery} and arises in practical settings where the set of attributes 
(pertaining to the \Ad, publisher, advertiser and/or user) is very large. 
In addition, it was proposed to extend summary reports to enable \emph{requerying}~\cite{requeryingInAra}, 
which would allow analysts to aggregate their measurement reports on different overlapping slices in an interactive manner 
(i.e., where the output of previous queries can influence the choice of the subsequent queries issued by 
the analyst).

ARA also offers \emph{event-level reports}~\cite{eventReportsDev}, which can provide for each \Ad 
impression, a discretized and noisy estimate of the
list of conversions and conversion values attributed to this impression. 
This type of data enables the training of \Ad conversion models, which are machine learning models
that predict the expected number of conversions (or the conversion value) that would be driven by an \Ad impression.  
The output of these ML models is typically used as input when determining the bid price in the online
auctions that power automated bidding across the Web.\\ 

Despite being deployed in the web browsers of hundreds of millions of users and supporting critical 
\Ad functionalities, the privacy guarantee of neither ARA nor PAA has been formalized rigorously. This gap has been raised in recent 
work~\cite{tholoniat2024alistair,xiao2024click}.

\paragraph{Our Contributions}
In this work, we address this gap and obtain several results formalizing the DP properties of ARA and PAA.
Specifically:
\begin{itemize}
    \item We prove that summary reports in ARA and PAA as well as event-level reports in ARA satisfy a 
        formal DP guarantee even in a highly interactive setting (which captures common practical use
        cases) where the queries and the underlying database can change arbitrarily depending on previous reports.
    \item Our privacy proof for ARA and PAA summary reports applies to the aforementioned proposed 
        extensions that would support key discovery and requerying.
\end{itemize}
Note, however, that both ARA and PAA protect only the \emph{cross-website} (a.k.a. ``third-party'') 
information while the single-website (a.k.a. ``first-party'') information
is assumed to be known to the ad-tech\footnote{%
	In this work, we use the term {\em ad-tech} to refer to an analyst that performs analysis on \Ad{}s and their attributed conversions, thereby helping publishers and advertisers with the placement and measurement of digital ads.
}.  For example, if a user is logged-in on the publisher website,
then the publisher knows the ads shown to the user and 
if a user is logged-in on the advertiser website, then the advertiser knows the user's conversions.  However, if the two websites do not share the same login credentials, the ad-tech will be unable to attribute the conversion to the ad shown, without third party cookies.
The ARA allows the ad-tech to access this attribution information, but in a privacy-preserving manner.

\subsection{Related Work}\label{subsec:related_work}
\paragraph{Differential Privacy.}
Over the last two decades, DP \cite{DworkMNS06, dwork2006our} has become a widely popular notion of privacy in data analytics and modeling, due to its compelling mathematical properties and the strong guarantees that it provides. It has been used in several practical deployments in government agencies (e.g., \cite{census2020, israelNationalRegistry}) and the industry (e.g., \cite{googleMobility}); we refer the reader to \cite{realWorldPrivacy} for a list covering many deployments. As DP is rolled out in ad measurement to replace third-party cookies, it is likely to become one of the largest, if not the largest, real-world deployment of DP, in terms of the number of daily queries and affected users.

\paragraph{Private Ad Measurement.}

In addition to the Privacy Sandbox on Google Chrome and Android, several other APIs have been proposed by various browsers, ad-tech companies, and researchers. 
These include Private Click Measurement (PCM) on Safari~\cite{pcm}, SKAdNetwork on iOS~\cite{skadNetwork}, 
Interoperable Private Attribution (IPA) that was developed by Meta and Mozilla~\cite{ipa}, Masked LARK from Microsoft~\cite{lark}, and 
Cookie Monster which was recently introduced in \cite{tholoniat2024alistair}. All of these APIs, except PCM, use DP to ensure privacy.
We note that our proof, inspired by Cookie Monster \cite{tholoniat2024alistair} is using individual differential privacy accounting. 
Interestingly, while CookieMonster analyzes a similar setting, their notion of adjacency assumes that only impressions, or only conversions are known to the adversary. In other words, CookieMonster setting assumes the ability to hide impressions and/or conversions from the adversary, which may not be compatible with  proposed APIs from most browsers, in particular the ARA and the PAA.

The recent work of \cite{delaney2024differentially} studied the interplay between attribution and DP budgeting for an abstract conversion measurement system. 
It showed that depending on the attribution logic (e.g., first-touch, last-touch, uniform), the privacy unit (e.g., per impression, per user $\times$ publisher, 
per user $\times$ advertiser), and whether contribution capping is performed before or after attribution, 
the sensitivity of the output aggregate can increase with the number of publishers and advertisers rendering the noise too high for accurate measurements.

There has also been recent work on optimizing the utility of the Privacy Sandbox ARA summary reports. 
For hierarchical query workloads, \cite{dawson2023optimizing} gave algorithms for denoising summary reports,
ensuring consistency, and optimizing the contribution budget across different levels of the hierarchy. On the other hand, \cite{aksu2024} presented methods
for optimizing the allocation of the contribution budget for general (not necessarily hierarchical) workloads.

Finally, there has been recent work on DP \Ad (click and conversion) modeling (e.g., \cite{denison2023private, tullii2024position, chua2024private}), 
some of it based on the Privacy Sandbox APIs.

\paragraph{Organization.}
\Cref{sec:ps} provides a simplified description of ARA and PAA; we focus primarily on details relevant for our analysis.
\Cref{sec:formal-notion} provides formal notations, covers basic background on DP and defines the notion of an interactive mechanism and adversary that is relevant for our modeling of ARA and PAA.
In \Cref{sec:summary-reports}, we model summary reports of ARA and PAA and provide the formal DP guarantee for the same.
In \Cref{sec:event-level-reports} we model event-level reports of ARA and provide a formal DP guarantee.%

\section{Privacy Sandbox Measurement API\lowercase{s}}
\label{sec:ps}

We next describe the details of the ARA and PAA APIs and their role in collecting measurements about \Ad{}s and their \emph{attributed conversions}.
(See \Cref{tab:glossary} for a glossary of the involved terminology.)

The \emph{Attribution Reporting API (ARA)} enables ad-techs to measure ad-conversions in a privacy-preserving manner (without third-party cookies). 
In particular, ARA supports two types of reports:
\begin{itemize}[nosep]
	\item \emph{summary reports}, that allow collecting aggregated    
	statistics of \Ad campaigns and their ``attributed'' conversions, and
	\item \emph{event-level reports}, that associate a particular 
	\Ad with very limited (and noisy) data on the conversion side,
	which are sent with a larger time delay.
\end{itemize}

The \emph{Private Aggregation API (PAA)} is another API that also supports {\em summary reports} for collecting aggregated statistics in a privacy-preserving manner. While it is a general API (not necessarily about \Ad{}s),
a typical use-case is in estimating \emph{reach} 
(the number of users who were exposed to an \Ad{}) and \emph{frequency} 
(the number of users that were exposed to an \Ad{} $k$ times, for each $k$).

We first provide a high-level overview of
the \emph{summary reports} in ARA and PAA (in \Cref{sec:summary-ara-paa}), 
and of the \emph{event-level reports} in ARA (in \Cref{sec:summary-event-level}).

\subsection{Summary Reports}
\label{sec:summary-ara-paa}
Summary reports in both ARA and PAA rely on two main components: 
(i) the \emph{client}, which runs in the browser, and
(ii) the \emph{aggregation service}, which runs in a trusted execution environment (TEE)~\cite{teeGit}.
Each client performs local operations based on browser activity and
sends aggregatable reports to ad-techs.  (The exact mechanism that generates these reports is different for ARA and PAA, and we explain this shortly.)  Formally,
an \emph{aggregatable report} is a tuple $(r, k, v, m)$ containing a \emph{key} $k \in \aggRepKeys := \{0, 1\}^{128}$,
a \emph{value} $v \in \{0, 1, \ldots, \contribBudget\}$ (where $\contribBudget := 2^{16}$),
a random identifier $r \in \RID := \{0, 1\}^{128}$ unique to each report 
(the Privacy Sandbox implementation uses AEAD~\cite{AEAD-definition} 
to ensure that the identifier is tamper-proof), and some metadata $m$ that can depend on ``trigger information'' as explained later.  Here, $k$ and $v$ are encrypted, where the secret key for decrypting is held by the aggregation service thereby ensuring that an ad-tech cannot see them and $m$ is visible to the ad-tech in the clear, who can use it to batch the reports for aggregation.  For simplicity, henceforth, we drop $m$ from  aggregate reports  
as it does not affect the privacy analysis.

\subsubsection{Aggregation Service}
Ad-techs can send a subset of the reports, $S = \{(r_1, k_1, v_1), \ldots, (r_n, k_n, v_n)\}$,  they hold (filtered based on the available metadata) to the aggregation service, with the goal of learning, for any $k \in \aggRepKeys$, the aggregated value $w_k = \sum_{i : k_i = k} v_i$.
The aggregation service aggregates the reports to generate a noisy version of this aggregated value.
It supports two modes: one with {\em key discovery} and other without.
\begin{itemize}
\item Without key discovery, the ad-tech needs to provide a subset $L \subseteq \aggRepKeys$ of keys they are interested in, and they
only get corresponding aggregated values $w_\ell$ for $\ell \in L$ after addition of noise sampled from the discrete Laplace distribution, denoted as $\DLap(a)$, which is supported over all integers with probability mass at $x$ proportional to $e^{-a|x|}$.
\item With key discovery, the aggregation service adds noise to each $w_k$ sampled from the {\em truncated} discrete Laplace distribution, $\DLap_{\tau}(a)$, supported over integers in $[-\tau, \tau]$ with probability mass at $x$ proportional to $e^{-a|x|}$. But the noisy values are released only for keys $k$ where these noisy values are greater than $\tau$. The subset $L$ of keys is thus ``discovered'' from the reports themselves.
\end{itemize}
In addition, the aggregation service uses a \emph{privacy budget service} to enforce that the privacy budget
is respected for each aggregatable report.\footnote{%
	In this paper we assume that the aggregation service is tracking budget ``per report'', but in reality it uses a
	coarse id called \emph{shared report id} that consists of the API version, the website that generated the report, 
	and the reporting time in seconds~\cite{sharedReportId} to track a single budget for all reports assigned to the shared report id.
}
This entails maintaining $B_{\eps} : \RID \to \R_{\ge 0}$ that tracks, for each report, the sum of $\eps$ values with which the said report has participated in aggregation requests; the privacy budget service enforces that this sum never exceeds a fixed value $\eps_*$\footnote{%
    Currently this value is equal to $64$~\cite{epsilon64}.
}
Additionally, if using the key discovery mode, an additional state of $B_{\delta} : \RID\to [0, 1]$ is maintained that tracks the number of times a report participated in an aggregation request; and it is enforced that this never exceeds a fixed value $\ell_*$\footnote{%
    Currently this value is equal to $1$ for both ARA and and PAA; this corresponds to no ``re-querying''~\cite{requeryingInAra}.
}
For simplicity we present the aggregation service (with or without key discovery) and the privacy budget service together in \Cref{alg:aggregation-service}. All pseudocode we provide in this paper are for explaining the underlying functionality, and not meant to reflect actual implementation of these APIs.

\newcommand{\nokeydisc}[1]{\textcolor{black!10!blue}{#1}}
\newcommand{\keydisc}[1]{\textcolor{black!30!Ggreen}{#1}}
\begin{algorithm}[t]
	\caption{\AggregationService (with or without Key Discovery)}
	\label{alg:aggregation-service}
	\begin{algorithmic}
		\STATE \algcomment{Parts specific to the service without key discovery are in \nokeydisc{blue}.}
		\STATE \algcomment{Parts specific to the service with key discovery are in \keydisc{green}.}
		\STATE \textbf{Params:} Contribution budget $\contribBudget \in \Z_{> 0}$,
		\STATE \phantom{\textbf{Params:}} privacy parameters $\eps_* \in \R_{\ge 0}$ \keydisc{and $\delta_* \in [0, 1]$}, 
		\STATE \textbf{State:} Privacy budget trackers $B_{\eps} : \RID \to \R_{\ge 0}$ \keydisc{, $B_{\delta} : \RID \to [0, 1]$.}
		\STATE \textbf{Inputs:} Privacy parameters $\eps > 0$ \keydisc{and $\delta \in [0, 1]$},
		\STATE \phantom{\textbf{Input:} } Aggregatable reports 
		$(r_1, k_1, v_1)$, \dots, $(r_n, k_n, v_n)$,
		\STATE \phantom{\textbf{Input:} } 
		\nokeydisc{Subset $L = \{\ell_1, \dots, \ell_m\} \subseteq \aggRepKeys$ of keys.}
		\STATE \textbf{Output:} Summary report $(\ell_1, w_1)$, \dots, $(\ell_m, w_m)$.
		\IF{$\exists i \in [n]$ such that $B_{\eps}(r_i) + \eps > \eps_*$ \keydisc{or $B_{\delta}(r_i) + \delta > \delta_*$}} 
		\STATE {\bf Abort}\algcomment{Privacy budget violated for some report.}
		\ELSE
		\STATE $B_{\eps}(r_i) \gets B_{\eps}(r_i) + \eps$ for all $i \in [n]$
		\STATE \keydisc{$B_{\delta}(r_i) \gets B_{\delta}(r_i) + \delta$ for all $i \in [n]$}
		\ENDIF
		\STATE \keydisc{Let $L \subseteq K$ be the set of distinct keys among $k_1$, \dots, $k_n$}
		\STATE \keydisc{$\tau \gets \contribBudget \cdot \left (1 + \log(\sparsityBudget / \delta) / \eps \right)$}
		\FOR{$\ell \in L$}
		\STATE $w_\ell \gets z_\ell + \sum_{j ~:~ k_j = \ell} v_j$ \quad for \ $z_\ell \sim \DLap_{\keydisc{\tau}}(\eps / \contribBudget)$
		\ENDFOR
		\RETURN $\{(\ell, w_\ell) : \ell \in L \keydisc{\text{ and } w_\ell > \tau}\}$
	\end{algorithmic}
\end{algorithm}

\begin{remark}%
    We note that the currently supported implementation of the aggregation service is weaker than our description
    in that the key discovery mode is not supported and $\ell_* = 1$ is enforced for all reports 
    (that is, each report can participate in at most one aggregation request).
    However, our results hold even under these proposed extensions of key discovery~\cite{keyDiscovery} and requerying~\cite{requeryingInAra}.
\end{remark}

Next, we describe the clients for ARA and PAA that generate aggregatable reports based on cross-site information and send them to the ad-tech.

\subsubsection{ARA Client}\label{subsubsec:ara-client}
We explain the workings of $\ARASRClient$ (the client that supports ARA summary reports in the browser) in context of an example 
visualized in \Cref{fig:aggregatable-report-example}. 
Suppose a publisher's page (\texttt{\small some-blog.com}) displays \Ad{}s as part of a ``Thanksgiving'' 
campaign about sneakers, sandals, and flip-flops that can be purchased on an advertiser's page 
(\texttt{\small shoes-website.com}). 
The ad-tech would like to measure the amount of money spent on shoes on the advertiser's page
that could be attributed to these ads, and in particular, with an attribution rule that ``the purchase must happen within three days after an ad was shown''.

To use $\ARASRClient$, the ad-tech annotates both the publisher and advertiser pages with additional 
JavaScript or HTML that points to a URL with a specific HTTP header 
(see \cite{attributionReportingAPI} for details). 
$\ARASRClient$ can get invoked in two ways, either when the \Ad is shown on the publisher's page, 
or when a conversion happens on the advertiser's page as described below.

\begin{figure*}
	\centering
	\includegraphics[width=0.8\linewidth]{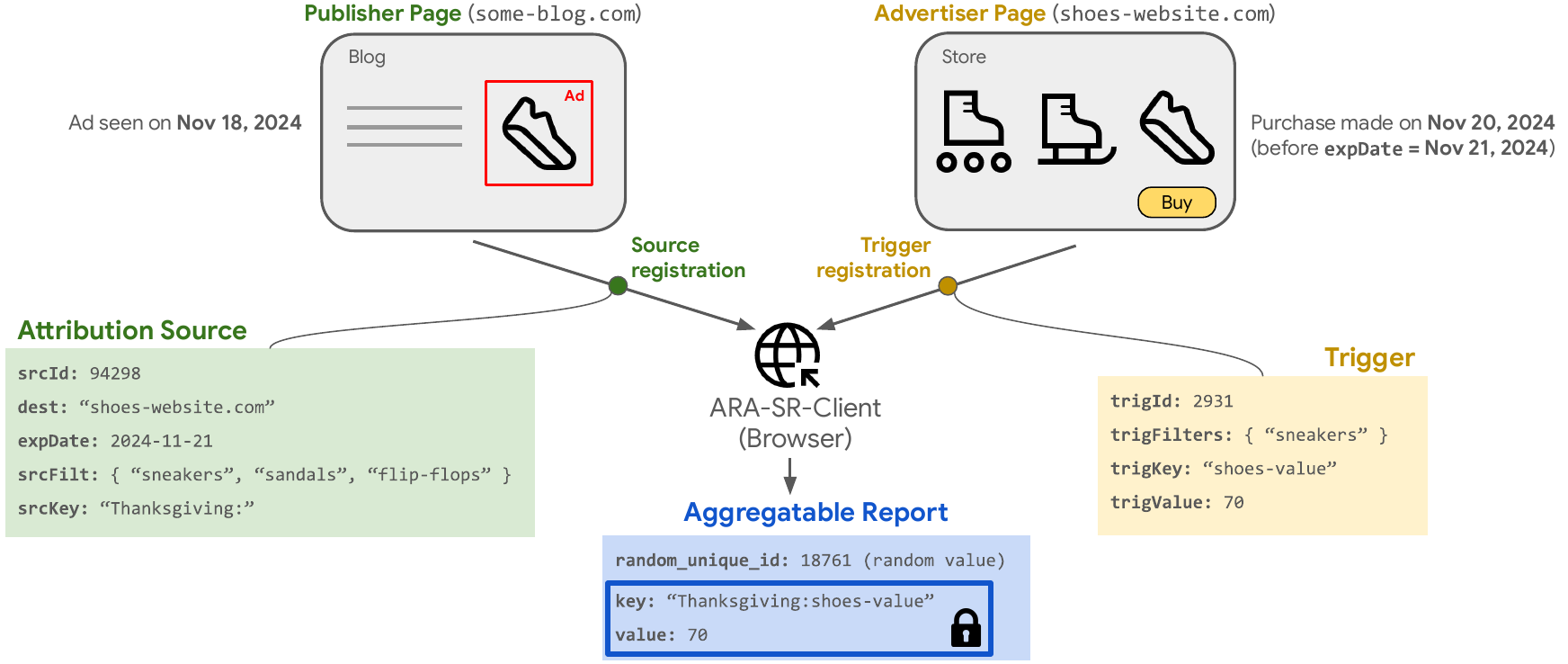}
	\caption{
		Illustrative example of an aggregatable report generated by \ARASRClient.
	}
	\label{fig:aggregatable-report-example}
\end{figure*}

\begin{itemize}
	\item
	On the \emph{publisher page}, the ad-tech registers a so-called ``{\em attribution source}'' with the $\ARASRClient$ in the browser (e.g., corresponding to a view/click ad event). In our example the source corresponds to an ad shown on \texttt{\small some-blog.com}. This entails specifying a tuple
	$(\sourceId, \destination, \expDate, \sourceFilters, \sourceKey)$, where
	\begin{itemize}
		\item $\sourceId$ is an identifier associated to the source event 
		(in our example it is a random
        id generated when an ad is shown on \texttt{\small some-blog.com}),
		\item $\destination$ is the advertiser domain where the conversion could happen 
		(\texttt{\small shoes-website.com} in our example),
		\item $\expDate$ is an expiration date for attributing conversions to the source 
		(three days in our example),
		\item $\sourceFilters$ is a set of filter keys, each being a bounded bit string,
		(treated as strings ``sneakers'', ``sandals'', and ``flip-flops'' in our example for simplicity),
		\item \sourceKey $\in \aggRepKeys$ is a key associated with the source (in our example we take it to be the string ``{\tt Thanksgiving:}'' for illustration).
	\end{itemize}
	Each source registration invokes the ``Source registration'' part of \Cref{alg:ara-client}, 
        which updates a set $\activeSources$
        of registered sources.
	\item 
	On the \emph{advertiser page}, the ad-tech registers a so-called ``\emph{trigger}'' corresponding to a qualifying user activity (such as a purchase on \texttt{\small shoes-website.com} in our example)
	by specifying a tuple
	$(\triggerId, \triggerFilters, \triggerKey, \triggerValue)$,
	where
	\begin{itemize}
		\item \triggerId is an identifier associated with the trigger 
		(in our example it is a random id generated when the purchase happened on \texttt{\small shoes-website.com}),
		\item \triggerFilters is a set of filter keys similar to \sourceFilters 
		(in our example it could be either ``sneakers'', or ``sandals'', or ``flip-flops''),
		\item \triggerKey $\in \aggRepKeys$ is a key associated to the trigger (in our example we take it to be the string ``{\tt shoes-value}'' for illustration).
		\item \triggerValue  $\in \{0, 1, \ldots, \contribBudget\}$ is a value associated to the trigger
		(in our example it is the price of the shoes purchased).
	\end{itemize}
	Each trigger registration invokes the ``Trigger registration'' part of \Cref{alg:ara-client}.
\end{itemize}

At any point of time, $\ARASRClient$ uses
$L_1 : \activeSources \to \Z_{\ge 0}$ to track the sum of all values contributed, and
$L_0 : \activeSources \to \Z_{\ge 0}$ to denote the number of non-zero values contributed per registered source, with $L_0(s)$ and $L_1(s)$ initialized to $0$ at the time of registration of source $s \in \activeSources$.

When a user visits the advertiser's page and a trigger is registered, $\ARASRClient$
(\Cref{alg:ara-client}) matches the trigger to the most recently registered {\em active} source (namely, those with $\expDate$ ahead of the current time) and 
generates an aggregatable report; aggregatable reports for ARA contain \triggerId
as part of its metadata (earlier referred to as $m$ in $(r, k, v, m)$). If no source is matched, a null report, denoted $(r, \bot, \bot)$ in \Cref{alg:ara-client} is sent. Such reports are ignored by the aggregation service~(\Cref{alg:aggregation-service}).

\begin{remark}
In reality the metadata $m$ and the time the report is sent can allow the ad-tech to associate the received report with the particular device. Moreover, if the trigger does not get attributed to any source, and $\triggerId$ is not set, the corresponding null report is not sent (see \cite{nullReports}). This could potentially allow the ad-tech to know if a trigger was attributed to a source or not. These are handled by the ARA client in practice with some heuristics such as sending reports without a trigger id with some delay, as well as sending some fake null reports with small probability. However, these heuristics would not allow us to prove a formal DP guarantee, so we do not consider this case.
\end{remark}

\begin{remark}
	$\ARASRClient$ (\Cref{alg:ara-client}) is using the so-called ``last-touch'' attribution, namely, 
    that in presence of several potentially matching sources, the most recently registered one is chosen.
	While we choose to only consider last-touch attribution for simplicity, 
    our results hold for any attribution method (as long as any trigger is fully attributed to only one source). 
    Indeed, ARA uses a more involved attribution strategy where impressions
    can be assigned priority (at source registration) and the trigger is attributed to the last source with the highest priority.
\end{remark}

\begin{algorithm}[t]
\caption{\ARASRClient}
\label{alg:ara-client}
\begin{algorithmic}
\STATE \textbf{Params:} Contribution and sparsity budgets $\contribBudget, \sparsityBudget \in \Z_{> 0}$.
\STATE \textbf{State:}
\begin{itemize}[topsep=0pt]
\item Set of registered sources $\activeSources$,
\item Sparsity Budget Tracker $L_0 : \activeSources \to \Z_{\ge 0}$,
\item Contribution Budget Tracker $L_1 : \activeSources \to \Z_{\ge 0}$.\vspace{1mm}
\end{itemize}
\STATE \hrule
	\STATE \textbf{Source registration:} 
	\STATE On input $s = (\sourceId, \destination, \expDate, \sourceFilters, \sourceKey)$
	\STATE $\activeSources \gets \activeSources \cup \{s\}$ \algcomment{Add $s$ to set of registered sources.}
	\STATE $L_0(s) = 0$ and $L_1(s) = 0$\vspace{1mm}
\STATE \hrule
\STATE {\bf Trigger registration:}
\STATE On input $(\destination, \triggerId, \triggerFilters, \triggerKey, \triggerValue)$.
\STATE $r \gets$ a random report id in $\RID$
\IF{$\triggerValue > 0$}
	\FOR{active $s \in \activeSources$ (in reverse chronological order)}
		\IF{$\destination = s.\destination$ and  $\triggerFilters \cap s.\sourceFilters \neq \emptyset$}
			\IF{$L_0(s) + 1 \le \sparsityBudget$ and 
				$L_1(s) + \triggerValue \le \contribBudget$}
				\STATE $v \gets \triggerValue$
				\STATE $k \gets $ bit-wise OR of \sourceKey and \triggerKey
				\STATE $L_0(s) \gets L_0(s) + 1$
				\STATE $L_1(s) \gets L_1(s) + \triggerValue$
				\STATE {\bf\boldmath Send report $(r, k, v)$ and halt}
			\ENDIF
		\ENDIF
	\ENDFOR
\ENDIF
\STATE {\bf\boldmath Send null report $(r, \bot, \bot)$}
\end{algorithmic}
\end{algorithm}

Here is how our example would play out, as visualized in \Cref{fig:aggregatable-report-example}.
A publisher's page (\texttt{\small some-blog.com}) displays an \Ad, on Nov 18th about shoes that can be purchased on an advertiser's
page (\texttt{\small shoes-website.com}). When an ad is displayed at \texttt{\small some-blog.com} 
the source gets registered with a random $\sourceId$, with 
$\destination$ equal to $\texttt{\small ``shoes-website.com''}$, an expiry date of Nov 21st, that is three days into the future from the time the ad was shown,
a set of $\sourceFilters= \texttt{\small \{``sneakers'', ``sandals'', ``flip-flops''\}}$
that restricts which purchases can be attributed to
this source and, a source key $\sourceKey=\texttt{\small ``Thanksgiving:''}$ to tie this ad to a certain campaign.

When the user subsequently purchases a sneaker from the advertiser's page, the advertiser can choose to register a trigger with a random $\triggerId$, 
a set of $\triggerFilters=\texttt{\small\{``sneakers''\}}$ that restricts the potential sources that this conversion can be attributed to, a trigger key 
$\triggerKey=\texttt{\small``shoes-value''}$ to record the interpretation of the value, and finally $\triggerValue=70$, which is the price of the sneaker.

The \ARASRClient (\Cref{alg:ara-client}) then attributes the trigger to the source and generates an aggregatable report $(r, k, v)$ with the key being 
$k=\texttt{``Thanksgiving:shoes-value''}$ (where for ease of illustration we use string concatenation instead of bitwise-OR), value $v=70$, and $r$ being a 
random id.

Recall that the ad-tech knows details of the \Ad impression that was displayed on the publisher website,
as well as details of the conversion that happened on the advertiser's website; these are referred to as ``first-party'' information. 
However, without third-party cookies, the ad-tech cannot link the two events as happening 
on the same browser and thus cannot attribute the conversion to the impression. 
The ARA helps ad-techs access this ``third-party'' information linking the conversion to the impression. 
But it does so in a privacy-preserving manner by enforcing that the total number of generated reports 
associated to any source is at most $\sparsityBudget$, and that the total value of such generated reports is at most $\contribBudget$.
We provide the formal privacy guarantees implied by these properties of ARA summary reports as \Cref{cor:ara-sr-dp} in \Cref{sec:summary-reports}.

\subsubsection{PAA Client \& Shared Storage}\label{subsubsec:paa-client}
\begin{figure*}
\centering
\includegraphics[width=0.85\linewidth]{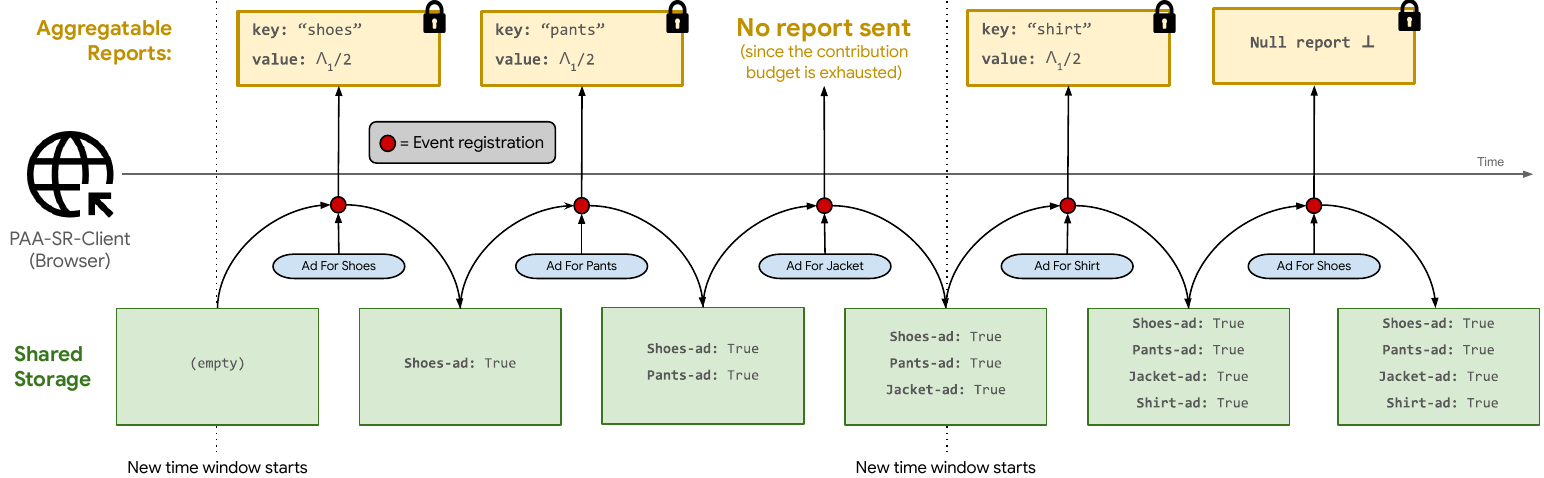}
\caption{Illustrative example of an aggregatable report generated by \PAAClient.}
\label{fig:paa-aggregatable-report-example}
\end{figure*}
The PAA client is typically used in tandem with the Shared Storage API~\cite{shared-storage}.
Shared Storage is a key-value database stored on the browser that can be accessed by the code using PAA across any
websites visited on a browser, which the ad-tech can annotate with their code.

Recall that $\ARASRClient$ tracks the contribution and sparsity budgets for each ``source''. 
However, unlike ARA, the PAA does not have a concept of registering a source. 
PAA instead enforces contribution and sparsity budgets separately for each ad-tech
and each ``time-window''.
Formally, let $\deviceSet$ denote the set of all devices, and let $\timeWindowSet$ denote the partition of all time into contiguous time windows, 
each of a fixed length.\footnote{Each time window is 10 minutes long in the current implementation of PAA~\cite{10minutes}.}
We use $\Xi$ to denote the shared storage, where $\Xi_\device$ denotes the part of storage accessible to the device $\device \in \deviceSet$.
At any point of time, $\PAAClient$ uses $L_1 : \deviceSet \times \timeWindowSet \to \Z_{\ge 0}$ to track the sum of all values contributed and $L_0 : \deviceSet \times \timeWindowSet \to \Z_{\ge 0}$
to track the number of non-zero values contributed per $(\device, t)$, with $L_0(\device,t)$ and $L_1(\device, t)$ initialized to $0$ at the start of time window $t$ for any $\device \in \deviceSet$.

\begin{algorithm}[t]
    \caption{\PAAClient}
    \label{alg:paa-client}
    \begin{algorithmic}
        \STATE \textbf{Params:} Contribution and sparsity budgets $\contribBudget, \sparsityBudget \in \Z_{>0}$.
        \STATE \textbf{State:}
        \begin{itemize}[topsep=0pt]
            \item Shared storage $\Xi \in \mathfrak{X}^{\deviceSet}$ (we use $\mathfrak{X}$ to denote the set of possible states of the shared storage, one for each device $\device \in \deviceSet$),
            \item Sparsity Budget Tracker $L_0 : \deviceSet \times \timeWindowSet \to \Z_{\ge 0}$,
            \item Contribution Budget Tracker $L_1 : \deviceSet \times \timeWindowSet \to \Z_{\ge 0}$.\vspace{1mm}
        \end{itemize}
        \STATE \hrule
        \STATE {\bf Event Registration}
        \STATE On following inputs:
        \begin{itemize}[topsep=0pt]
        \item The device $\device$ registering the event,
        \item Time window $t$ when event is invoked, and
        \item Function $\pi : \mathfrak{X} \to \mathfrak{X} \times \aggRepKeys \times \{0, 1, \ldots, \contribBudget\}$ 
            that on input being the current state of shared storage $\Xi$, returns its next state in addition to a key value pair $(k, v)$.
        \end{itemize}
        \STATE $\Xi_\device, k, v \gets \pi(\Xi_\device)$ \algcomment{Note: Ad-tech does not learn what is inside $\Xi_a$. Moreover, even the returned $(k, v)$ is not visible to ad-tech.}
        \STATE $r \gets$ a random report id in $\RID$
        \IF{$v > 0$}
        	\IF{$L_0(a, t) + 1 \le \sparsityBudget$ and $L_1(a, t) + v \le \contribBudget$}
        		\STATE $L_0(\device,t) \gets L_0(\device, t) + 1$
        		\STATE $L_1(\device, t) \gets L_1(\device, t) + v$
        		\STATE {\bf\boldmath Send aggregatable report $(r, k, v)$ and halt}
        	\ENDIF
        \ENDIF
        \STATE {\bf\boldmath Send null report $(r, \bot, \bot)$}
    \end{algorithmic}
\end{algorithm}

We explain the working on $\PAAClient$ in context of an example 
visualized in \Cref{fig:paa-aggregatable-report-example}. 
Assume an ad-tech has four \Ad campaigns (for shoes, pants, jackets, and shirts) 
running in parallel and they want to estimate the reach of each campaign 
(the number of people exposed to the ads from campaigns) across different publisher websites. 
Suppose the ad-tech expects that most of the people would only see at most two ads within a single time window $t \in \timeWindowSet$.

To use the PAA and Shared Storage APIs, the ad-tech annotates any webpage it has access to with a JavaScript that can read and write to Shared Storage and can register an ad {\em event};
the registration requires providing a method $\pi$ that maps the current state of the shared storage, to the next state of the storage, a key $k \in \aggRepKeys$, and 
a corresponding  value $v \in \{0, 1, \ldots, \contribBudget\}$ 
(in our example the key is the campaign name and the value is $\contribBudget / 2$ if the ad from that campaign was seen for the first time and $0$ otherwise; this is looked up from shared storage). 
When an event is registered, $\PAAClient$ (\Cref{alg:paa-client}) checks that the new addition would not violate the contribution and sparsity budget for the ad-tech at the current time window,
and if that is indeed the case, it generates an aggregatable report $(r, k, v)$ for a random ID $r \in \RID$.

To run through our example, the user sees first the ads for shoes and pants and the reports
are sent; next the user sees an ad for a jacket, but the contribution budget is exhausted so no 
report is sent. Finally, in the next time window, the user sees an ad for a shirt and the report
is sent since the budget is refreshed. However, when the user sees an ad for shoes again, a null report is sent because the contribution is $0$, since shared storage indicates that a shoes ad was seen previously.

Similar to case of ARA, recall that the ad-tech knows details of any specific event in PAA, which is considered ``first-party'' information. However, without third-party cookies, the ad-tech cannot link information across events happening on the same browser across different websites; this would be ``third-party'' information.
The PAA helps ad-techs access this ``third-party'' information via the use of the Shared Storage API.
But it does so in a privacy-preserving manner by enforcing that the total number of generated reports for any ad-tech in any time window is at most $\sparsityBudget$, and that the total value of such generated reports is at most $\contribBudget$. The access to Shared Storage is also ``sandboxed'' in a manner that the information within can only be used for purposes of generating the aggregatable report.
We provide the formal privacy guarantees implied by these properties of PAA summary reports as \Cref{cor:paa-sr-dp} in \Cref{sec:summary-reports}.

\subsection{Event-level Reports}
\label{sec:summary-event-level}

In addition to summary reports, ARA also supports {\em event-level} reports~\cite{eventLevelReports}.
Before describing the event-level reports formally, we consider an example visualized in \Cref{fig:event-level-report-example}. 
Again, suppose a publisher's webpage (\texttt{\small some-blog.com}) displays an ad for footwear
sold on an advertiser's site (\texttt{\small shoes-website.com}).

To use $\ARAEventClient$, the ad-tech annotates both the publisher and advertiser pages
in a manner that is similar to case of summary reports (see \Cref{subsubsec:ara-client}). $\ARAEventClient$ can get invoked in three ways, (i) when the \Ad is shown on publisher's page, (ii) when a conversion happens on the advertiser's page, and (iii) on the passing of a ``reporting window'', as explained below (this happens automatically without any action from the ad-tech). For simplicity, we first describe the ``noiseless'' version of $\ARAEventClient$ in \Cref{alg:noiseless-event-level-report}.

\begin{figure*}
\centering
\includegraphics[width=0.99\linewidth]{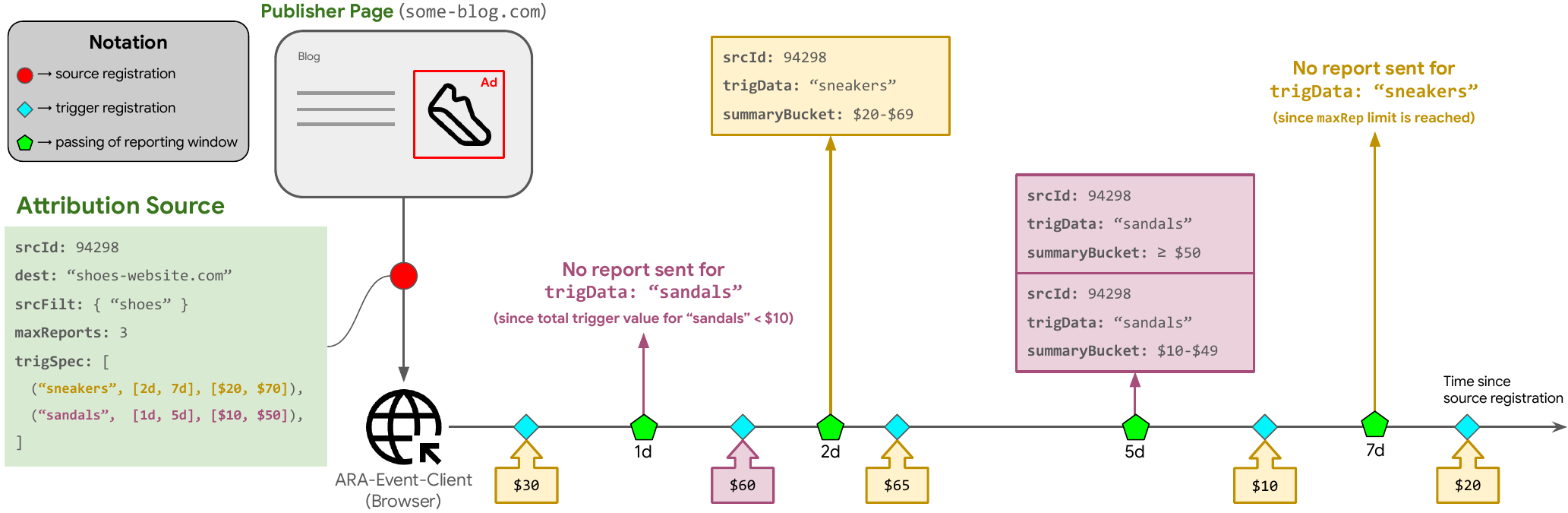}
\caption{Illustrative example of event-level reports generated by \ARAEventClient.}
\label{fig:event-level-report-example}
\end{figure*}

\begin{itemize}
\item 
On the \emph{publisher page}, the ad-tech registers an {\em attribution source} with $\ARAEventClient$ in the browser. %
This entails specifying a tuple 
$
(\sourceId, \destination, \expDate, \sourceFilters, \maximalNumberOfReports, \triggerSpecification)
$,
where
\begin{itemize}
	\item \sourceId, \destination, \expDate, \sourceFilters are the same as in the case of the ARA summary reports,
	\item \maximalNumberOfReports is the maximum number of event-level reports that can be attributed to this source 
	(in our example it is 3),
	\item $\triggerSpecification$,\footnote{%
		In real implementation a list of possible values of trigger data is passed separately and there
		is a default specification, but we ignore this here for simplicity.
	} short for ``trigger specificiation'', is a list of at most 32 elements (in our case we have two specifications, one for ``sneakers'' and 
	one for ``sandals'') each consisting of 
	\begin{itemize}
		\item $\triggerData \in \triggerDataSet$, where $\triggerDataSet$ is the set of possible values that $\triggerData$ can take; while this is restricted to be a $5$-bit integer, we treat $\triggerData$ as a string in our example for ease of visualization.
		\item at most five \emph{reporting windows} that specify when the new reports should be sent; 
		each reporting window is time (in seconds) 
		from the source registration time.
		\item \emph{summary buckets}, an 
		increasing list of integer thresholds 
		such that a report is sent when the threshold is met (subject to \maximalNumberOfReports limit).
	\end{itemize}
\end{itemize}
Each source registration invokes the ``Source registration'' part of \Cref{alg:noiseless-event-level-report}; visualized by the red dot in \Cref{fig:event-level-report-example}. This adds the source to the list of active registered sources $\activeSources$ and initializes the current values and last reported bucket values to zero.

\item 
On the \emph{advertiser page}, the ad-tech registers a {\em trigger} corresponding to a qualifying user activity (such as a purchase on \texttt{\small shoes-website.com} in our example) 
by specifying a tuple containing $(\destination, \triggerId, \triggerFilters, \triggerData, \triggerValue)$, 
where
\begin{itemize}
	\item \destination, \triggerId, \triggerFilters are the same as in the case of ARA summary reports,
	\item \triggerData is data associated with the trigger, which must match one from \triggerSpecification 
	(if it does not match any in the specification it gets ignored), and
	\item \triggerValue is an integer value associated with the trigger (in our example the amount spent on the shoes).
\end{itemize}
Each trigger registration invokes the ``Trigger registration'' part of \Cref{alg:noiseless-event-level-report}; visualized by the cyan diamond in \Cref{fig:event-level-report-example}. Here $\ARAEventClient$ searches for an active source that this trigger could be attributed to, and upon finding one, adds the value of the current trigger to the value associated to the source for this $\triggerData$.
\end{itemize}

\begin{algorithm}[t]
\caption{\ARAEventClient (Noiseless version).}
\label{alg:noiseless-event-level-report}
\begin{algorithmic}
\STATE \textbf{State:}
\STATE $\bullet$ Sequence of active registered sources $\activeSources$, %
\STATE $\bullet$ Current value $V : \activeSources \times \triggerDataSet \to \Z_{\ge 0}$,
\STATE $\bullet$ Last reported bucket $U : \activeSources \times \triggerDataSet \to \Z_{\ge 0}$,
\STATE $\bullet$ Number of reports sent $N : \activeSources \to \Z_{\ge 0}$.\vspace{1mm}
\STATE \hrule
	\STATE \textbf{Source registration:} 
	\STATE On input $s = (\sourceId, \destination, \expDate, \sourceFilters, \maximalNumberOfReports, \triggerSpecification)$
	\STATE $\activeSources \gets \activeSources \cup \{s\}$ \algcomment{Add $s$ to set of active sources.}
	\STATE $N(s) \gets 0$
	\FOR{$\triggerData \in \triggerDataSet$}
		\STATE $V(s, \triggerData) \gets 0$ and $U(s, \triggerData) \gets 0$ \vspace{1mm}
	\ENDFOR
\STATE \hrule
	\STATE \textbf{Trigger registration:}
	\STATE On input $(\destination, \triggerId, \triggerFilters, \triggerData, \triggerValue)$
	\IF{$\triggerValue > 0$}
		\FOR{$s \in \activeSources$ (in reverse chronological order)}
			\IF{$\destination = s.\destination$ and  $\triggerFilters \cap s.\sourceFilters \neq \emptyset$}
				\IF{$\triggerData$ listed in $s.\triggerSpecification$}
					\STATE $V(s, \triggerData) \gets V(s, \triggerData) + \triggerValue$
					\STATE \algcomment{Add to value associated to this source and $\triggerData$.}
					\STATE {\bf break}\vspace{1mm}
				\ENDIF
			\ENDIF
		\ENDFOR
	\ENDIF
\STATE \hrule
	\STATE \textbf{\boldmath At passing of time window for $(s, \triggerData) \in \activeSources \times \triggerDataSet$:}
	\STATE Let $b_1, \ldots, b_k \in \Z_{> 0}$ be the list of associated summary buckets to $\triggerData$ in $\triggerSpecification$ of source $s$, in increasing order.
	\FOR{$i=1,\ldots, k$}
		\IF{$U(s, \triggerData) < b_i \le V(s, \triggerData)$}
			\STATE $U(s, \triggerData) = b_i$
			\IF{$N(s) < s.\maximalNumberOfReports$}
				\STATE $N(s) \gets N(s) + 1$
				\STATE {\bf\boldmath Send report $(s.\sourceId, \triggerData, b_i)$ to ad-tech}
			\ENDIF
		\ENDIF
	\ENDFOR
\end{algorithmic}
\end{algorithm}

\noindent Finally, for each registered source $s$, whenever the amount of time passed equals the reporting window for any $\triggerData$ in $s.\triggerSpecification$, a report is sent to the ad-tech using \Cref{alg:noiseless-event-level-report}; in our example, $1$ day after the source is registered, the first reporting window for $\triggerData = \texttt{\small ``sandals''}$ is passed and $\ARAEventClient$ get invoked. All such invocations associated to passing of reporting windows is visualized as a green pentagon in \Cref{fig:event-level-report-example}.

We now walk through the behavior of $\ARAEventClient$ in the example visualized in \Cref{fig:event-level-report-example}; we present the ``noiseless'' behavior of the algorithm before describing the noisy version.
\begin{itemize}
	\item \textbf{Day 0:} An ad is shown on publisher's page and the source $s$ gets registered.
	\item \textbf{Day 1:} The user buys \$30 sneakers from advertiser's page. This is attributed to source $s$ and $V(s, \texttt{\small sneakers})$ is set to $30$.
	
	At the end of day 1, the reporting window for $\texttt{\small sandals}$ is passed, but no report is generated because $V(s, \texttt{\small sandals})$ is still $0$.
	\item \textbf{Day 2:} The user buys \$60 sandals. This is attributed to source $s$, and $V(s, \texttt{\small sandals})$ is set to $60$.
	
	At the end of day 2, the reporting window for $\texttt{\small sneakers}$ is passed, and a report is generated indicating a bucket value of \$20-\$69. At this point $U(s, \texttt{\small sneakers})$ is $20$.
	\item \textbf{Day 4:} The user buys \$65 sneakers. This is attributed to source $s$ and $V(s, \texttt{\small sneakers})$ gets incremented to become $95$.
	\item \textbf{Day 5:} At the end of day 5, the second reporting window for $\texttt{\small sandals}$ is passed, and two reports are generated for sandals as $V(s, \texttt{\small sandals}) = 60$, which exceeds both summary bucket values of \$10 and \$50.
	\item \textbf{Day 6:} The user buys \$10 sneakers. This is attributed to source $s$ and $V(s, \texttt{\small sneakers})$ is incremented to become $105$.
	\item \textbf{Day 7:} At the end of day 7, the second reporting window for $\texttt{\small sneakers}$ is passed, but no reports are sent, as the maximum number of reports $3$ for this source has been reached. Furthermore, the source now gets marked as inactive and no further purchases get attributed to this source.
\end{itemize}

\paragraph{Noisy Version of $\ARAEventClient$.}
Note that for each source, the set $\rroutputvalid$ of possible combinations of reports that could be sent is finite.
In our example, reports are sent for \texttt{\small ``sandals''} only at the end of Day 1 and Day 5. The number of reports sent on each of the two days is one among the $6$ possible options shown in \Cref{tab:event-options}.
\begin{wrapfigure}{r}{0.16\textwidth}
\centering\vspace{-3mm}
\begin{tabular}{|c|c|}
\hline
{\bf Day 1} & {\bf Day 5} \\
\hline
0 & 0\\
0 & 1 \\
0 & 2 \\
1 & 0 \\
1 & 1 \\
2 & 0\\
\hline
\end{tabular}
\captionof{table}{Possible number of reports sent on Day 1 and Day 5 for \texttt{\small ``sandals''}.}
\label{tab:event-options}
\vspace{-8mm}
\end{wrapfigure}
Similarly, the number of reports sent at the end of Day 2 and Day 7 is also among $6$ possible options. While these lead to $36$ possible configurations of reports sent, $9$ of them correspond to sending of $4$ reports, which is larger than the maximum limit of $3$ reports that could be sent. Thus, in total, there are $27$ possible valid configurations of reports that could get sent.
The private version of event-level reports works as follows: 
with probability $\frac{e^{\eps} - 1}{e^{\eps} + |\rroutputvalid| - 1}$ 
we proceed like in the noiseless version in \Cref{alg:noiseless-event-level-report}, otherwise we sample an element of $\rroutputvalid$ uniformly at
random, and generate responses accordingly. This choice can be made at the time of source registration, even before any triggers are observed.
We provide the formal privacy guarantees implied by these properties of ARA event-level reports as \Cref{theorem:event-level-privacy} in \Cref{sec:event-level-reports}.

\section{Differential Privacy}
\label{sec:formal-notion}

We use the notion of differential privacy (DP), which typically considers \emph{mechanisms} $\mechanism : \cD \to \Delta(\ResponseSet)$ that map input datasets $D \in \cD$ to probability distributions over a set  $\ResponseSet$ of responses. Central to the notion of DP, is the definition of ``adjacency'' of databases. Loosely speaking two databases $D, D' \in \cD$ are said to be adjacent
if they ``differ in one record''. We defer the definition of a database and the notion of adjacencies relevant for the analysis in each application to \Cref{sec:summary-reports}.
But for any notion of database and adjacency, differential privacy (DP) quantifies the ability (or lack thereof) of an adversary to distinguish two ``adjacent'' datasets $D$ and $D'$ by observing a sample from $\mechanism(D)$ or $\mechanism(D')$.

\begin{definition}[$(\eps, \delta)$-Indistinguishability]\label{def:DP-indistinguishable}
Two distributions $P$, $Q$ are said to be \emph{$(\eps, \delta)$-indistinguishable}, denoted $P \approx_{\eps, \delta} Q$ if
for all events $W$, it holds that
\[
P(W) \le e^\eps Q(W) + \delta \ \text{ and } \ Q(W) \le e^\eps P(W) + \delta.
\]
\end{definition}

\begin{definition}[Differential Privacy]\label{def:dp}\cite{DworkMNS06, dwork2006our}
A mechanism $\mechanism : \DataSets \to \Delta(\ResponseSet)$ satisfies \emph{$(\eps, \delta)$-DP} 
if for all adjacent datasets $D, D' \in \DataSets$, it holds that $\mechanism(D) \approx_{\eps, \delta} \mechanism(D')$.
The special case of $(\eps, 0)$-DP is denoted as $\eps$-DP for short.
\end{definition}

We will often refer to parameter $\eps$ or both $\eps$ and $\delta$ in the definition of $(\eps, \delta)$-DP as 
the \emph{privacy budget}.

However, as we discuss shortly, modeling the setting of ARA and PAA as mechanisms operating on a ``static'' database does not capture the interactive nature of these systems. Hence we consider a stronger notion of ``interactive mechanisms'' and ``interactive adversaries'' defined in \Cref{subsec:interactive-mechs}.

Since the aggregation service adds noise from the (truncated) discrete Laplace distribution, we rely on the following fact to prove DP properties of summary reports (for both ARA and PAA). For any integer $d \in \Z_{> 0}$, let $\DLap_{\tau}(a)^{\otimes d}$ be the distribution over $\Z^d$ where each coordinate is drawn independently from $\DLap_{\tau}(a)$.
\begin{fact}[\cite{GRS12}]\label{fact:dlaplace-dp}
    For all $\eps > 0$ and integer $\Delta > 0$, and vectors $u, v \in \Z^d$ such that $\|u - v\|_1 = \Delta$, 
    the distributions $P$ and $Q$ drawn as $u + \xi$ and $v + \xi$ for $\xi \sim \DLap(\eps/\Delta)^{\otimes d}$
    satisfy $P \approx_{\eps, 0} Q$.

    Furthermore, if $u - v$ has at most $s$ non-zero coordinates, then for all $\delta > 0$ and 
    $\tau \ge \Delta \cdot (1 + \log(s/\delta) / \eps)$,
    the distributions $P$ and $Q$ drawn as $u + \xi$ and $v + \xi$ for $\xi \sim \DLap_{\tau}(\eps/ \Delta)^{\otimes d}$ 
    satisfy $P \approx_{\eps, \delta} Q$.
\end{fact}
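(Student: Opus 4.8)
The plan is to derive both parts from the standard pointwise likelihood-ratio bound for discrete Laplace noise, handling the truncated case as the untruncated one plus a ``boundary overflow'' term that gets absorbed into $\delta$. For the first part, I would note that $\DLap(\eps/\Delta)^{\otimes d}$ has mass at $\xi \in \Z^d$ proportional to $e^{-(\eps/\Delta)\|\xi\|_1}$, so for every $z \in \Z^d$,
\[
  \frac{P(z)}{Q(z)} \;=\; \exp\!\left(-\tfrac{\eps}{\Delta}\left(\|z - u\|_1 - \|z - v\|_1\right)\right) \;\le\; \exp\!\left(\tfrac{\eps}{\Delta}\,\|u - v\|_1\right) \;=\; e^{\eps}
\]
by the triangle inequality, and symmetrically $P(z)/Q(z) \ge e^{-\eps}$. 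Summing a pointwise $e^{\pm\eps}$ density bound over an arbitrary event $W$ then gives $P(W) \le e^\eps Q(W)$ and $Q(W) \le e^\eps P(W)$, i.e. $P \approx_{\eps,0} Q$.

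For the second part, let $R_u := \prod_{i=1}^d\{u_i-\tau,\dots,u_i+\tau\}$ and $R_v := \prod_{i=1}^d\{v_i-\tau,\dots,v_i+\tau\}$ be the supports of $P$ and $Q$ respectively. I would decompose any event $W$ as the disjoint union $W = (W\cap R_v)\cup(W\setminus R_v)$. On $R_v$ the masses of $P$ and $Q$ are proportional to $e^{-(\eps/\Delta)\|\cdot-u\|_1}$ and $e^{-(\eps/\Delta)\|\cdot-v\|_1}$ \emph{with the same normalizing constant} $Z_\tau^d$ (the truncation constant depends only on $\eps/\Delta$ and $\tau$, not on the shift), so the computation from the first part applies verbatim to give $P(z) \le e^\eps Q(z)$ for every $z \in R_v$, and hence $P(W\cap R_v) \le e^\eps Q(W\cap R_v) \le e^\eps Q(W)$. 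Since $Q(W\setminus R_v)=0$, it then remains only to show $P(\Z^d\setminus R_v) \le \delta$; together with the symmetric bound (swapping the roles of $u$ and $v$, using $R_u$) this yields $P \approx_{\eps,\delta} Q$.

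The remaining step, and the one I expect to need the most care, is the tail estimate $P(\Z^d\setminus R_v) \le \delta$. Here I would union-bound over the at most $s$ coordinates $i$ with $u_i \ne v_i$: writing $\delta_i := |u_i - v_i| \le \Delta$, the coordinate $u_i + \xi_i$ with $\xi_i \sim \DLap_\tau(\eps/\Delta)$ leaves $\{v_i-\tau,\dots,v_i+\tau\}$ only when $\xi_i$ lands within distance $\delta_i$ of the appropriate endpoint $\pm\tau$ of its own range, and a geometric-tail bound for the truncated discrete Laplace shows this probability is at most (a constant factor times) $e^{-(\eps/\Delta)(\tau-\delta_i)} \le e^{-(\eps/\Delta)(\tau-\Delta)}$. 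The hypothesis $\tau \ge \Delta\left(1 + \log(s/\delta)/\eps\right)$ is exactly what makes $\tfrac{\eps}{\Delta}(\tau-\Delta) \ge \log(s/\delta)$, so each term is at most $\delta/s$ and the union bound closes. The two delicate points are (i) correctly accounting for the normalizing constant $Z_\tau$ in the per-coordinate tail bound (for which comparing $Z_\tau$ with the untruncated normalizer, or simply $Z_\tau \ge 1$ in the relevant parameter regime, suffices) and (ii) that $\delta_i$ can be as large as $\Delta$, so the slack built into the stated choice of $\tau$ is precisely what lets the bound go through; everything else is the textbook discrete-Laplace argument.
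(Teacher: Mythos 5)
Your proposal is correct and follows essentially the same route as the paper's proof in the appendix: the pointwise likelihood-ratio bound via the triangle inequality for the untruncated case, a decomposition of events into the common-support part (where the ratio bound applies, since $P$ vanishes off $R_u$ your $R_v$ plays the same role as the paper's $S = R_u \cap R_v$) plus a boundary-overflow part charged to $\delta$, and a union bound over the $s$ differing coordinates combined with a geometric tail estimate for $\DLap_\tau$. The only cosmetic difference is that the paper's tail lemma shows $\Pr_{X \sim \DLap_\tau(a)}[X > \tau - \Delta] \le e^{-a(\tau - \Delta)}$ with constant exactly $1$ (by comparing the truncated normalizer to the full geometric series), so the ``constant factor'' you were worried about in point (i) does not actually appear.
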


While the privacy guarantee of the discrete Laplace mechanism was studied in \cite{GRS12}, we include a proof for the case of truncated discrete Laplace noise in \Cref{apx:tdlaplace} for completeness.

\subsection{Interactive Mechanisms}
\label{subsec:interactive-mechs}

As noted earlier, DP is typically defined for ``one-shot'' mechanisms that map databases to probability distributions over a response set. However, there are two key reasons why the generation process of ARA/PAA summary reports is not a ``one-shot'' mechanism:
\begin{enumerate}
	\item \label{item:adaptivity}
	There is adaptivity in the choice of contribution values, which can be changed based on previously observed summary reports. For example, in context of ARA, the ad-tech can change the scale and interpretation of $\triggerValue$ when making contributions based on previously observed summary reports.
	\item \label{item:online}
	The ad-tech can have some influence on the new events that get added to the database themselves.
	For example, in the context of ARA, the advertiser might decide based on summary reports obtained
	on day $1$ to change the product price on day $2$, which could influence the number of subsequent 
	conversions.
\end{enumerate}
To argue the DP properties of summary and event-level reports, it is helpful to model their generation process as an interaction between a mechanism and an adversary defined below. Abstractly speaking, let $\DataSets$ denote the set of all ``databases'', and let $\QuerySet$ and $\ResponseSet$ denote a set of ``queries'' and ``responses'' as used below.
\begin{definition}[Interactive Mechanism] 
    \label{def:interactive-system}
    An \emph{interactive mechanism} with state set $\StateSet$ is represented by an
    initial state $\mechState_0 \in \StateSet$ and a function 
    $\mechanism : \StateSet \times \DataSets \times \QuerySet \to \StateSet \times \Delta(\ResponseSet)$, 
    that maps the current state $S \in \StateSet$, a database $D \in \DataSets$ and a query $q \in \QuerySet$ to the next state $S' \in \StateSet$ and a distribution over responses $r \in \ResponseSet$.%
\end{definition}

We will often abuse notation to denote the output of $\mechanism(S, D, q)$ as $(S', r)$ where $S'$ is the next state and $r$ is drawn from the distribution over $\ResponseSet$ returned by $\mechanism(S, D, q)$.

\begin{definition}[Interactive Adversary]
\label{def:interactive-adversary}
    An \emph{interactive adversary} $\adversary : \ResponseSet^* \to (\DataSets \times \QuerySet) \cup \{\halt\}$ maps the history of responses $(r_1, r_2, \ldots) \in \R^*$, to the next database $D \in \DataSets$ and query $q \in \QuerySet$, or ``halt'' ($\halt$).
\end{definition}

The interaction between $\mechanism$ with initial state $\mechState_0 \in \StateSet$, and an interactive
adversary $\adversary$ is described in \Cref{alg:general-transcript} and results in the probability distribution
$\transcript{\mechanism}{\adversary}$ of \emph{transcripts} $\Pi$, which is a sequence of responses $(r_1, r_2, \ldots) \in \ResponseSet^*$.

\begin{algorithm}[t]
\caption{Interactive Transcript $\transcript{\mechanism}{\adversary}$.}
\label{alg:general-transcript}
\begin{algorithmic}
\STATE \textbf{Inputs:} $\triangleright$ Interactive mechanism $\mechanism$ with initial state $\mechState_0$,
\STATE \phantom{\textbf{Inputs:}} $\triangleright$ Interactive adversary $\adversary$ 
\STATE $\Pi \gets ()$ and $t \gets 1$ \algcomment{Empty transcript.}
\WHILE{$\adversary(\Pi) \ne \halt$}
\STATE $(D_t, q_t) \gets \adversary(\Pi)$ \algcomment{Adversary creates database \& query.}
\STATE $(\mechState_t, r_t) \sim \mechanism(\mechState_{t-1}, D_{t}, q_t)$ \algcomment{Mechanism samples response.}
\STATE $\Pi \gets \Pi \circ r_t$ and $t \gets t + 1$
\ENDWHILE
\RETURN $\Pi$
\end{algorithmic}
\end{algorithm}

In order to define what it means for an interactive mechanism to satisfy DP, we need to define the notion of ``adjacency'' for databases.
For now, let us abstractly say that database is a set of records $(x, y) \in \cX\times \cY$.
The set $\cX$ is assumed to be known to the adversary and we will refer to $x \in \cX$ as a ``privacy unit''.
Let $\cY$ be an arbitrary set for the sake of our notation here; we instantiate it appropriately as relevant later.
For any database $D$, let $D^{-x}$ denote the database that loosely speaking ``removes records in $D$ corresponding to $x \in \cX$ or replaces them with a certain generic one''.
We leave this notion to be abstract for now, and we instantiate the specific notion of adjacency when describing the formal guarantees of ARA and PAA. 
For any interactive mechanism $\mechanism$ and any $x \in \cX$, let $\mechanism^{-x}$ denote the mechanism that replaces the dataset $D_t$ by $D_t^{-x}$ at each step: that is, $(S_t, r_t) \sim \mechanism(S_{t-1}, D_t, q_t)$ is 
replaced by $(S_t, r_t) \sim \mechanism(S_{t-1}, D_t^{-x}, q_t)$ in \Cref{alg:general-transcript}. We define DP for interactive mechanisms as follows.

\begin{definition}[DP for Interactive Mechanisms]\label{def:xDP}
    An interactive mechanism $\mechanism$ satisfies $(\eps, \delta)$-DP
    if for all interactive adversaries $\adversary$ and all $x \in \cX$, it holds that
    $\transcript{\mechanism}{\adversary} \approx_{\eps, \delta} \transcript{\mechanism^{-x}}{\adversary}$.
\end{definition}

\begin{remark}
A single round version of our definition coincides with the standard definition of DP for the adjacency notion where $D$ and $D^{-x}$ are adjacent.
For multiple rounds, our definition is strictly stronger than the standard definition of DP for interactive mechanisms,
where the adversary always returns the same database at each step; we refer to such adversaries as ``stable''.
\end{remark}

\section{Analysis of Summary Reports}
\label{sec:summary-reports}

\subsection{DP guarantees from IDP}

For any interactive mechanism $\mechanism$ and a sequence of databases and queries $((D_1, q_1), (D_{2}, q_{2}), \ldots)$, let $\function_t(\cdot)$ denote the distribution over response $r_t$ as returned by $\mechanism(S_{t-1}, \cdot, q_t)$; note that the sequence $(S_0, S_1, \ldots, S_{t-1})$ of mechanism states is deterministic given the sequence of databases and queries. 
We say that the sequence $((\eps_1, \delta_1), (\eps_2, \delta_2), \ldots)$ is a {\em privacy rollout} of the mechanism $\mechanism$ for $x \in \cX$ on the sequence $((D_1, q_1), (D_2, q_2), \ldots)$, if $\function_t(D_t) \approx_{\eps_t, \delta_t} \function_t(D_t^{-x})$ holds for all $t$.

\begin{definition}[Individual DP]\label{def:eps-delta-IDP}
An interactive mechanism $\mechanism$ satisfies {\em $(\eps_*, \delta_*)$-IDP} if for all $x \in \cX$ and all sequences $((D_1, q_1)$, $(D_2, q_2)$, $\ldots)$ of databases and queries, if $((\eps_1, \delta_1), (\eps_2, \delta_2), \ldots)$ is a privacy rollout of $\mechanism$ for $x$ on the said input sequence, then $\sum_t \eps_t \le \eps_*$ and $\sum_t \delta_t \le \delta_*$.%
\end{definition}

\noindent Our main technical result is that IDP implies DP. This can be viewed as the approximate-DP variant of  \cite[Theorem 4.5]{FZ21}, which proves a qualitatively similar statement for R\'enyi DP although our result is for the more general case of interactive adversaries, wherein even the database can change based on previous responses. We note however that our proof technique is quite general, and can be applied in the R\'enyi DP setting to extend the result of \cite{FZ21} to the case of interactive adversaries as well.

\begin{theorem}\label{thm:individual-DP-to-DP}
    If an interactive mechanism satisfies $(\eps_*, \delta_*)$-IDP, then it satisfies $(\eps_*, \delta_*)$-DP.
\end{theorem}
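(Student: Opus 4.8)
The plan is to fix an arbitrary interactive adversary $\adversary$ and privacy unit $x \in \cX$, and show directly that the transcript distributions $\transcript{\mechanism}{\adversary}$ and $\transcript{\mechanism^{-x}}{\adversary}$ are $(\eps_*, \delta_*)$-indistinguishable. The central idea is a hybrid/coupling argument over the rounds of the interaction: because the adversary is deterministic and depends only on the history of responses $\Pi = (r_1, \ldots, r_{t-1})$, once we condition on a prefix of responses the pair $(D_t, q_t)$ chosen by $\adversary$ is \emph{the same} in both the real and the $^{-x}$ execution, and likewise the mechanism state $S_{t-1}$ is a deterministic function of that prefix (identical in both executions, since $\mechanism$ and $\mechanism^{-x}$ differ only in swapping $D_t \mapsto D_t^{-x}$ when sampling $r_t$, not in the state update — here I would want to double-check the state-update convention, but the definition says the next state is returned by $\mechanism(S_{t-1}, D_t, q_t)$, so I will need the state to be updated consistently; if the state update also sees $D_t$ versus $D_t^{-x}$, the argument still goes through because conditioning on the response prefix pins down everything). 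Thus, conditioned on any prefix $(r_1,\ldots,r_{t-1})$, the conditional law of $r_t$ under $\mechanism$ is $\function_t(D_t)$ and under $\mechanism^{-x}$ is $\function_t(D_t^{-x})$, for the \emph{same} $\function_t$ and the \emph{same} $D_t$ determined by that prefix. By the IDP hypothesis applied to the (prefix-determined) sequence of databases and queries, these two conditional laws are $(\eps_t, \delta_t)$-indistinguishable for some rollout with $\sum_t \eps_t \le \eps_*$ and $\sum_t \delta_t \le \delta_*$.

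The next step is to turn this per-round conditional indistinguishability into an indistinguishability bound on the full transcript. I would do this via an advanced-composition-style / ``indistinguishability chain'' argument phrased directly in terms of the $\approx_{\eps,\delta}$ relation. Concretely, introduce the hybrid transcripts $H_j$ that run $\mechanism^{-x}$ for the first $j$ rounds and $\mechanism$ thereafter (or the other way around); adjacent hybrids $H_{j-1}$ and $H_j$ differ only in which mechanism produces the $j$-th response, and by the previous paragraph this difference is $(\eps_j, \delta_j)$-indistinguishable \emph{conditioned on every prefix}. A clean way to aggregate is to use the characterization of $(\eps,\delta)$-indistinguishability via the hockey-stick divergence $D_{e^\eps}(P\|Q) = \sup_W (P(W) - e^\eps Q(W)) \le \delta$, and prove a ``conditional composition'' lemma: if for every value of a first coordinate the conditional distributions of the second coordinate are $(\eps',\delta')$-indistinguishable, and the marginals on the first coordinate are $(\eps'',\delta'')$-indistinguishable, then the joints are $(\eps'+\eps'', \delta'+\delta'')$-indistinguishable — then induct on the number of rounds. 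The subtlety is that $\eps_j, \delta_j$ are not fixed constants but depend on the realized prefix; however, the IDP definition quantifies over \emph{all} sequences and guarantees $\sum_t \eps_t \le \eps_*$, $\sum_t \delta_t \le \delta_*$ for the rollout of \emph{each} such sequence, so along every branch of the tree of possible transcripts the accumulated budget is bounded by $(\eps_*, \delta_*)$. I would make this rigorous by a careful induction over rounds where the inductive statement is parametrized by the prefix.

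The main obstacle I anticipate is precisely this last point: handling the fact that the per-round privacy parameters $(\eps_t,\delta_t)$ are prefix-dependent random quantities rather than a fixed schedule. The composition theorems for DP are usually stated for a fixed sequence of $(\eps_t,\delta_t)$, whereas here the ``rollout'' can branch. One needs either (i) a branch-wise argument showing the $\delta$ terms add up correctly even when the split of the total budget across rounds varies with the history — which works because on each individual branch the sum is still at most $\eps_*$ and $\delta_*$ — or (ii) a potential/martingale-style argument tracking a ``privacy loss'' random variable. I expect approach (i), implemented as an induction on the round index with the inductive hypothesis ``for every prefix of length $t$, the conditional transcript-suffix distributions are $(\eps_* - \text{(budget spent on the prefix)}, \delta_* - \ldots)$-indistinguishable'', to be the cleanest; the delicate part is verifying the base case and that the budget-accounting inequality is preserved under taking the $\sup$ over events (equivalently, summing hockey-stick divergences) at each step. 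A secondary, more bookkeeping-level obstacle is dealing with transcripts of unbounded (but almost surely finite) length — I would reduce to finite-length transcripts by a truncation/limiting argument, or observe that the adversary's halting makes each transcript a leaf of a well-founded tree, so the induction is over this tree rather than over $\mathbb{N}$.
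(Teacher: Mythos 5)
Your overall strategy is reasonable but takes a genuinely different route from the paper. The paper does not run a direct induction over rounds; instead it reduces the whole interactive game to known privacy-filter machinery: it defines a universal interactive mechanism $\mechanism_{\filter}$ governed by the filter $\filter_{\eps_*, \delta_*}$, imports from \cite{rogers16privacy} the fact that this filtered mechanism is $(\eps_*, \delta_*)$-DP against stable adversaries (\Cref{lem:filters}), lifts this to the individual-budget version by ``zooming in'' on the single unit $x$ (\Cref{lem:indv-filter}), and then handles the interactive, database-changing adversary by a simulation: given $\adversary$ for $\mechanism$, it builds $\adversary'$ for $\indmechanism_{\filter_{\eps_*, \delta_*}}$ whose query at step $t+1$ is the map $\tilde D \mapsto \mechanism(S_t, \tilde D, q_{t+1})$, so the changing database is absorbed into a (stable-adversary) query, and the IDP hypothesis guarantees the filter never masks anything, making the simulation exact. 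The paper's route delegates all of the delicate budget accounting to the cited filter theorem; your route, if completed, would yield a self-contained proof.

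The gap is that the crux of your plan --- the ``conditional composition'' step asserting that per-prefix $(\eps_t, \delta_t)$-indistinguishability of the round-$t$ conditionals, with branch-dependent parameters summing to at most $(\eps_*, \delta_*)$ along every branch, yields $(\eps_*, \delta_*)$-indistinguishability of the full transcripts with the $\delta$'s accumulating \emph{additively} --- is exactly the content of the privacy-filter theorem, and it does not follow from the naive hockey-stick computation you sketch. Composing a marginal bound $P(W) \le e^{\eps''} Q(W) + \delta''$ with $(\eps', \delta')$-indistinguishability of the conditionals gives a bound of the form $e^{\eps' + \eps''} Q(W) + \delta' + e^{\eps'} \delta''$; the $\delta$'s pick up multiplicative $e^{\eps}$ factors unless one first applies a decomposition lemma (writing each $(\eps, \delta)$-indistinguishable pair as a mixture of an $(\eps, 0)$-indistinguishable pair and a $\delta$-mass remainder) and couples the resulting ``bad events'' across rounds --- and doing this when the parameters $(\eps_t, \delta_t)$ themselves depend on the realized prefix is precisely the nontrivial contribution of \cite{rogers16privacy}. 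You correctly flag this as the delicate point, but the proposal leaves it unresolved; to close it you must either carry out that decomposition-and-coupling argument branch-wise over the transcript tree, or do as the paper does and invoke the filter theorem as a black box after reducing the interactive adversary to a stable one.
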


To prove the above, we apply the tool of DP filters~\cite{rogers16privacy} in the setting of IDP~\cite{ESS15,FZ21}.

We defer the full proof to \Cref{subsec:idp-to-dp-proof},
and first describe how this implies the DP guarantees of the ARA and PAA summary reports by appropriately instantiating the notion of a database and adjacency as well as the notion of queries and responses and showing that the interactive mechanism that generates the corresponding summary reports satisfies an IDP guarantee and hence by the above result, also satisfies a DP guarantee.

\subsection{Privacy of ARA Summary Reports}\label{subsec:ara-sr-privacy}

To prove the DP properties of ARA, we formalize an end-to-end mechanism $\MSR$ (\Cref{alg:summary-reports-interactive-mechanism}) that simulates the joint behavior of $\ARASRClient$ (\Cref{alg:ara-client}) and the aggregation service (\Cref{alg:aggregation-service}) ultimately generating the summary reports.

\paragraph{Databases and Adjacency.} We model a database $D\in \cD_\text{ARA}$ as consisting of records 
$(x, y) \in \cX_\bot \times \cY$
where $\cX_\bot = \cX \cup \{\dummyImpression\}$ is the set $\cX$ of all possible 
``sources'' registered across all devices
in addition to a ``dummy source'' that we denote as $\dummyImpression$ and $\cY$ is the set of all possible
``triggers'' registered across all devices; note that triggers are in one-to-one correspondence with the report ID $r$ part of the generated aggregatable report $(r, k, v)$, and thus for simplicitly, we interchangeably use $y \in \cY$ to denote the report ID.
For any database $D \in \DataSets$ and $x \in \cX$, let $D^{-x}$ be the dataset obtained by moving all aggregatable
reports associated to $x$ to instead be associated with $\dummyImpression$,  that is, replace $(x, y)$ by $(\dummyImpression, y)$.
We note that adjacent databases in our notion have the same set of $x$'s and $y$'s, and thus, DP is not protecting against knowledge of these, but only the knowledge of which $y$'s are attributed to which $x$'s.

\paragraph{Queries, Responses and Mechanism States.} The query set $\cQ$ for $\MSR$ consists of tuples $q = (\eps, \delta, Y, f)$ where $(\eps, \delta) \in \R_{\ge 0} \times [0, 1]$ are privacy parameters for the query, $Y \subseteq \cY$ is a subset of triggers whose corresponding reports need to be aggregated and $f: \cX \times \cY \to \aggRepKeys \times \N$ is a function mapping a pair
of a source and a trigger to the candidate (key, value) for the aggregatable report generated by them.
The response set $\ResponseSet$ of $\MSR$ is the set of summary reports, namely $(\aggRepKeys \times \Z)^*$.
Finally, the state set $\StateSet$ of $\MSR$ is given by the tuple $(\{ (L_x, s_x) \}_{x \in \cX}, 
\{ (\eps_y, \delta_y) \}_{y \in \cY}, R)$.  Here, 
$L_x$ (resp., $s_x$) is the sum of all (resp., number of non-zero) contributions attributed to $x \in \cX$, 
$\epsilon_y, \delta_y$ are privacy budgets consumed for each report (equivalently trigger) $y \in \cY$, and $R$ is the set of all aggregatable reports generated so far.

\paragraph{Relating $\MSR$ to ARA Client and Aggregation Service.} For $\MSR$ to simulate the end-to-end generation of summary reports by ARA, we instantiate the database $D$ at each step to be the set of {\em new} impressions and trigger pairs registered since the last query to $\MSR$. In phase \#1, $\MSR$ updates the set $R$ in its state to have all the aggregatable reports generated so far, applying the contribution and sparsity bounding similar to $\ARASRClient$ (\Cref{alg:ara-client}). In phase \#2, $\MSR$ tracks and enforces that the privacy budget used per report specified in $Y$ is under limits. If not, it aborts. Else it updates these privacy budgets per report and in phase \#3, returns the noisy summation per key (with or without key discovery as specified). Phases \#2 and \#3 simulate the aggregation service (\Cref{alg:aggregation-service}).

\begin{algorithm}[t]
\caption{Interactive mechanism 
$\MSR : \StateSet \times \cD \times \QuerySet \to \StateSet \times \Delta(\ResponseSet)$.}
\label{alg:summary-reports-interactive-mechanism}
\begin{algorithmic}
\STATE \textbf{Params:} $\triangleright$ 
        Contribution budget $\contribBudget$,
        Sparsity budget $\sparsityBudget$,
\STATE \phantom{\textbf{Params:}} $\triangleright$ Global privacy parameters $(\eps_*, \delta_*)$.
\STATE \textbf{State:} $\triangleright$
        $\{ (L_x, s_x) \}_{x \in \cX}$,
	$\{ (\eps_y, \delta_y) \}_{y \in \cY}$, and $R \subseteq \cY \times \aggRepKeys \times \N$.
\STATE \textbf{Inputs:} $\triangleright$ Database $D \in \cD_\text{ARA}$,
\STATE \phantom{\textbf{Inputs:}} $\triangleright$ 
      Privacy parameters $\eps > 0$ and $\delta \in [0, 1]$,
\STATE \phantom{\textbf{Inputs:}} $\triangleright$ The list of triggers $Y \subseteq \cY$ whose corresponding
\STATE \phantom{\textbf{Inputs:} $\triangleright$} reports are to be aggregated, and
\STATE \phantom{\textbf{Inputs:}} $\triangleright$ 
    The function $f: \cX \times \cY \to \aggRepKeys \times \N$

\STATE {\color{black!70} \# 1. Contribution Bounding Per $x \in \cX$ (cf. $\ARASRClient$)}
\FOR{$(x, y) \in D$}
    \STATE $(k, v) \gets f(x, y)$
    \IF{$L_x + v \le \contribBudget$ and $s_x + 1 \le \sparsityBudget$}
        \STATE $L_x \gets L_x + v$
        \STATE $s_x \gets s_x + 1$
        \STATE $R \gets R \cup \{(y, k, v)\}$
    \ENDIF
\ENDFOR

\STATE
\STATE {\color{black!70} \# 2. Budget Bounding Per $y \in \cY$ (cf. Aggregation Service)}
\IF{$\exists y \in \cY$ such that $\eps_y + \eps > \eps_* \text{ or } \delta_y + \delta > \delta_*$}
    \RETURN (($\{ (L_x, s_x) \}_{x \in \cX}, \{ (\eps_y, \delta_y) \}_{y \in \cY}, R$), {\bf abort})
\ELSE
	\STATE $\eps_y \gets \eps_y + \eps$ and $\delta_y \gets \delta_y + \delta$ for all $y \in Y$
\ENDIF

\STATE 
\STATE {\color{black!70} \# 3. Noisy summation (cf. Aggregation Service)}
\STATE $\tau \gets \begin{cases}
	\infty & \text{if } \delta = 0\,,\\
	\contribBudget \cdot (1 + \log(\sparsityBudget / \delta) / \eps) & \text{if } \delta > 0\,.
\end{cases}$

\STATE $S \gets \emptyset$
	\FOR{$k \in \aggRepKeys$}
	      \STATE $c_k \gets \xi + \sum_{(y, k', v) \in R \, : \, y \in Y,\ k' = k} v$\ \ for\ \ $\xi \sim \DLap_{\tau}(\eps / \contribBudget)$
		\IF{$\tau = \infty$ or $c_k > \tau$}
		      \STATE $S \gets S \cup \{ (k, c_k) \}$
		\ENDIF
	\ENDFOR
\STATE
\RETURN $\left( ( \{ (L_x, s_x) \}_{x \in \cX}, \{ (\eps_y, \delta_y) \}_{y \in \cY}, R), S \right)$
\end{algorithmic}
\end{algorithm}

\begin{theorem}\label{thm:ara-sr-idp}
    $\MSR$ satisfies $(\eps_*, \delta_*)$-IDP with respect to aforementioned set of databases and neighbouring relation for ARA.
\end{theorem}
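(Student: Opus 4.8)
The plan is to establish the IDP guarantee directly from \Cref{def:eps-delta-IDP} by analyzing, for a fixed privacy unit $x \in \cX$, the per-step divergence between $\function_t(D_t)$ and $\function_t(D_t^{-x})$ along an arbitrary sequence of databases and queries, and then bounding the sum of the $\eps_t$'s (resp.\ $\delta_t$'s) that appear in any privacy rollout. The key observation is that the only difference between running $\MSR$ on $D_t$ versus $D_t^{-x}$ is in phase \#1: moving records $(x,y)$ to $(\dummyImpression, y)$ changes which $(y,k,v)$ triples get appended to $R$, because the contribution/sparsity check $L_x + v \le \contribBudget$, $s_x + 1 \le \sparsityBudget$ is now evaluated against the dummy source's counters rather than $x$'s. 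Phases \#2 and \#3 are then run identically (same $\eps,\delta,Y,f$, same abort logic, same noise distribution) on the two possibly-different sets $R$ and $R^{-x}$.

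First I would argue that the symmetric difference of the reports attributable to $x$ is controlled: across the \emph{entire} interaction, the records associated with $x$ that actually make it into $R$ contribute total value at most $\contribBudget$ and at most $\sparsityBudget$ nonzero entries, by the contribution-bounding invariant that $L_x \le \contribBudget$ and $s_x \le \sparsityBudget$ always hold (these counters are monotone and never exceed their budgets). Hence the vector $u - v$, where $u$ and $v$ are the per-key aggregated sums (restricted to $Y$) computed from $R$ and from $R^{-x}$ respectively at step $t$, satisfies $\|u-v\|_1 \le \contribBudget$ and has at most $\sparsityBudget$ nonzero coordinates — but crucially this $\ell_1$/sparsity \emph{budget is shared across all steps}: whatever portion of it is "spent" distinguishing the two runs at step $t$ is permanently consumed from $x$'s counters. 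Then I would invoke \Cref{fact:dlaplace-dp}: at a step with parameters $(\eps,\delta)$ where the differing part of $u-v$ has $\ell_1$-mass $\Delta_t$ and sparsity $s_t$, the noisy summation in phase \#3 (with noise $\DLap_\tau(\eps/\contribBudget)$ and, when $\delta>0$, truncation $\tau = \contribBudget(1 + \log(\sparsityBudget/\delta)/\eps) \ge \Delta_t(1 + \log(s_t/\delta)/\eps)$ since $\Delta_t \le \contribBudget$ and $s_t \le \sparsityBudget$) is $(\eps \cdot \Delta_t/\contribBudget,\ \delta)$-indistinguishable — and in fact one gets $\eps_t = \eps\,\Delta_t/\contribBudget$ with $\delta_t$ only incurred when there is an actual difference. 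Combined with the phase \#2 budget check, which guarantees $\sum_t \eps \le \eps_*$ and $\sum_t \delta \le \delta_*$ over the steps where $y$ is touched, and with $\sum_t \Delta_t \le \contribBudget$, the $\eps$-sum telescopes to $\sum_t \eps_t = \sum_t \eps\,\Delta_t/\contribBudget \le \eps_*$; the $\delta$-sum is bounded by $\sum_t \delta \le \delta_*$ over the (at most, since each costs sparsity) $\sparsityBudget$-many steps that actually differ.

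The step I expect to be the main obstacle is making the "shared budget across steps" bookkeeping rigorous: I must show that a record $(x,y)$ either enters $R$ at the \emph{first} step it appears (consuming $x$'s counters then and forever) or never does, and that once it is in $R$ it behaves identically in subsequent queries regardless of $x$ vs.\ $\dummyImpression$ — so that the only source of divergence at step $t$ is the set of records that get admitted differently in that step's phase \#1, and these are disjoint across steps. This requires carefully tracking the coupling of the two executions' states $(\{(L_x,s_x)\}, \{(\eps_y,\delta_y)\}, R)$ and $(\{(L_x,s_x)\}^{-x}, \ldots, R^{-x})$, noting they agree on everything except $L_x, s_x$ (and $L_{\dummyImpression}, s_{\dummyImpression}$) and on which $(x,\cdot)$ records sit in $R$. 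Once that invariant is in place, the per-step application of \Cref{fact:dlaplace-dp} and the summation are routine; I would also need to handle the abort case (phase \#2) — but since the abort condition depends only on $\eps,\delta,Y$ and the $\{(\eps_y,\delta_y)\}$, which are identical in both runs, the two executions abort in lockstep and the abort response is produced with probability $1$ in both, contributing $(0,0)$ to the rollout at that step.
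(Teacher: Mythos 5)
Your high-level skeleton matches the paper's: fix $x$, apply \Cref{fact:dlaplace-dp} step by step to get a privacy rollout, observe that the abort decision depends only on the $y$-side state and so happens in lockstep in both runs, and then bound the sums $\sum_t \eps_t$ and $\sum_t \delta_t$. But the central accounting step is wrong, and it is exactly the step you flagged as the main obstacle. You claim that the only source of divergence at step $t$ is the set of records admitted differently in \emph{that step's} phase \#1, that these sets are disjoint across steps, and hence that $\sum_t \Delta_t \le \contribBudget$ so the sum ``telescopes.'' This fails under requerying: phase \#3 aggregates over \emph{all} reports in $R$ with $y \in Y_t$, not just the newly admitted ones, and a report $(y,k,v)$ attributed to $x$ (versus its counterpart attributed to $\dummyImpression$ in the neighboring run) re-contributes its full value $v_y$ to the difference vector at \emph{every} step $t$ with $y \in Y_t$. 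So the per-step $\ell_1$ gaps $\Delta_t$ are not disjoint, $\sum_t \Delta_t$ can far exceed $\contribBudget$, and your invariant that an admitted record ``behaves identically in subsequent queries regardless of $x$ vs.\ $\dummyImpression$'' is false — if it were true there would be no need for the per-report budget tracker $B_\eps$ at all. Your argument is essentially correct only in the no-requerying special case ($\ell_*=1$), whereas the theorem is explicitly meant to cover requerying.

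The missing idea is a summation swap that combines the two \emph{different} budgets tracked by the two phases. Writing $Y_{x,t}$ for the reports attributable to $x$ that are queried at step $t$ and $Y_x = \bigcup_t Y_{x,t}$, one has $\eps_{x,t} = \frac{\eps_t}{\contribBudget}\sum_{y \in Y_{x,t}} v_y$, and then
\[
\sum_t \eps_{x,t} \;=\; \sum_{y \in Y_x} \frac{v_y}{\contribBudget} \sum_{t\,:\,Y_{x,t} \ni y} \eps_t \;\le\; \sum_{y \in Y_x} \frac{v_y}{\contribBudget}\,\eps_* \;\le\; \eps_*,
\]
where the inner sum is controlled by the \emph{per-report} budget enforced in phase \#2 (a single $y$ may be requeried many times, but its accumulated $\eps$ is capped at $\eps_*$), and the outer sum is controlled by the \emph{per-source} contribution bound from phase \#1. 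The $\delta$ accounting needs the analogous swap with the per-step cost taken as $\delta_{x,t} = \frac{\delta_t}{\sparsityBudget}\lvert Y_{x,t}\rvert$ (not the full $\delta_t$), which is what makes the truncation threshold $\tau = \contribBudget(1+\log(\sparsityBudget/\delta_t)/\eps_t)$ compatible with \Cref{fact:dlaplace-dp}; your bound of ``$\sum_t \delta \le \delta_*$ over at most $\sparsityBudget$-many differing steps'' does not hold, since the number of steps at which a difference manifests is unbounded under requerying.
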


Before proving \Cref{thm:ara-sr-idp}, we note that putting this together with \Cref{thm:individual-DP-to-DP}, immediately implies the following corollary.
\begin{corollary}\label{cor:ara-sr-dp}
    $\MSR$ satisfies $(\eps_*, \delta_*)$-DP with respect to aforementioned set of databases and neighbouring relation for ARA.
\end{corollary}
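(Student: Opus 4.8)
The corollary is an immediate consequence of the two preceding results, so the plan is simply to chain them. First I would invoke \Cref{thm:ara-sr-idp}, which establishes that $\MSR$ satisfies $(\eps_*, \delta_*)$-IDP with respect to the ARA database and adjacency notion, where the privacy unit set is the set $\cX$ of sources and $D^{-x}$ reassigns every report attributed to source $x$ to the dummy source $\dummyImpression$. Then I would invoke \Cref{thm:individual-DP-to-DP}, the general statement that $(\eps_*, \delta_*)$-IDP implies $(\eps_*, \delta_*)$-DP for any interactive mechanism. Composing the two yields exactly the claimed $(\eps_*, \delta_*)$-DP for $\MSR$, so the proof is a single line.

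The one point worth checking before declaring this immediate is that both theorems are instantiated with the \emph{same} notions of privacy unit and adjacency. \Cref{thm:individual-DP-to-DP} is proved in the abstract setting of \Cref{subsec:interactive-mechs}, for a generic set $\cX$ of privacy units and a generic ``removal'' operation $D \mapsto D^{-x}$; both its hypothesis (IDP, per \Cref{def:eps-delta-IDP}) and its conclusion (DP, per \Cref{def:xDP}) are quantified over the same $x \in \cX$ and built from the same derived mechanism $\mechanism^{-x}$. Since $\MSR$ is an interactive mechanism in the sense of \Cref{def:interactive-system} by construction in \Cref{alg:summary-reports-interactive-mechanism}, and \Cref{thm:ara-sr-idp} verifies its IDP hypothesis precisely under the ARA instantiation of $\cX$ and $D^{-x}$, the abstract theorem applies verbatim with that same instantiation. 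No additional argument is needed to pass from IDP to DP for this particular mechanism.

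Consequently there is no genuine obstacle at the level of the corollary itself: all of the substantive work is carried by its two inputs, namely the per-unit privacy-budget accounting that gives \Cref{thm:ara-sr-idp} and the DP-filter argument underlying \Cref{thm:individual-DP-to-DP}, both of which I am free to assume here. The only thing to be careful about is to state the conclusion for the correct adjacency relation, matching the ``move reports to $\dummyImpression$'' neighbouring notion used throughout \Cref{subsec:ara-sr-privacy}, rather than some other notion of adjacency; with that alignment confirmed, the corollary follows directly.
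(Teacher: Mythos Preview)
Your proposal is correct and matches the paper's own treatment exactly: the paper states that the corollary follows immediately by combining \Cref{thm:ara-sr-idp} with \Cref{thm:individual-DP-to-DP}, which is precisely the chaining you describe. Your additional sanity check that both theorems are instantiated with the same privacy unit set $\cX$ and adjacency $D \mapsto D^{-x}$ is sound and does not diverge from the paper's reasoning.
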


\begin{proof}[Proof of \Cref{thm:ara-sr-idp}]
    Let $(D_1, (\eps_1, \delta_1, Y_1, f_1)), \dots, $ be any sequence of databases and queries that is provided to $\MSR$;
    assume the state at step $t$ is 
    $(\{ (L_{t, x}, s_{t, x}) \}_{x \in \cX}$, $\{ (\eps_{t, y}, \delta_{t, y}) \}_{y \in \cY}, R_t)$. We assume without loss of generality that there is no round where $\MSR$ returns a response of {\bf abort}. This is because whether $\MSR$ outputs {\bf abort} or not is only a function of the $y$ values in the database, and is the same for $D$ and $D^{-x}$.
    
    Consider any $x \in \cX$. Denote by $R_{x, t} \subseteq R_t$ the set of all aggregatable 
    reports generated for $x$. Let $Y_{x, t}$ denote the set of all $y \in Y_t$ such that $(y, k_y, v_y) \in R_{x, t}$ for some $(k_y, v_y)$ and let $Y_x := \bigcup_{t} Y_{x,t}$.
    Note that $\sum_{y \in Y_{x}} v_y \le \contribBudget$ and $|Y_{x}| \le \sparsityBudget$ for all $t$.

    Thus, from the privacy guarantee of the discrete Laplace mechanism~(\Cref{fact:dlaplace-dp}) we have that
    $\MSR(S_{t - 1}, D_t, (\eps_t, \delta_t, Y_t, q_t)) \approx_{\eps_{x,t}, \delta_{x,t}} \MSR(S_{t - 1}, D_t^{-x}, (\eps_t, \delta_t, Y_t, q_t))$
    where
    \begin{align*}
        \eps_{x,t} := \frac{\eps_t}{\contribBudget} \cdot \sum_{y \in Y_{x,t}} v_y
        \qquad \text{and} \qquad
        \delta_{x,t} := \frac{\delta_t}{\sparsityBudget} \cdot |Y_{x,t}|,
    \end{align*}
    and each step $t$. Thus, we get that $\MSR$ satisfies $(\eps_*, \delta_*)$-IDP, since for any $x \in \cX$, it holds that
    \begin{align*}
        \sum_t \eps_{x,t} &= \sum_t \left(\frac{\eps_t}{\contribBudget} \cdot \sum_{y \in Y_{x,t}} v_y\right)
        ~\le~ \sum_{y \in Y_x} \frac{v_y}{\contribBudget} \cdot \sum_{t \ :\ Y_{x,t} \ni y} \eps_t
        ~\le~ \eps_*,
    \end{align*}
    where the last inequality follows because $\sum_{t : Y_{x,t} \ni y} \eps_t \le \eps_*$ for all $y \in \cY$ and $\sum_{y \in Y_x} v_y \le \contribBudget$ for all $x \in \cX$.  Similarly,
    \begin{align*}
        \sum_t \delta_{x,t} &= \sum_t \left(\frac{\delta_t}{\sparsityBudget} \cdot |Y_{x,t}| \right)
        ~\le~ \sum_{y \in Y_x} \frac{1}{\sparsityBudget} \sum_{t \ :\ Y_{x,t} \ni y} \delta_t ~\le~ \delta_*,
    \end{align*}
    where the last inequality follows because $\sum_{t : Y_{x,t} \ni y} \delta_t \le \delta_*$ for all $y \in \cY$ and $|Y_x| \le \sparsityBudget$ for all $x \in \cX$.
\end{proof}

\subsection{Privacy of PAA Summary Reports}\label{subsec:paa-sr-privacy}

The DP properties of PAA can be proved with the same formalism of the interactive mechanism $\MSR$~(\Cref{alg:summary-reports-interactive-mechanism}), with just a reinterpretation of databases and the adjacency notion.

In the context of PAA summary reports, we model a database $D\in \cD_\text{PAA}$ as consisting of records 
$(x, y) \in \cX \times \cY_\bot$ where $\cX$ is the set of all pairs consisting of (device, $t$) for 
$t \in \timeWindowSet$ is the time window and $\cY_\bot$ is the set of all possible 
states of the shared storage $\cY$ and a dummy shared storage $\dummySharedStorage$ (denoting an ``empty'' shared storage).
For any database $D \in \DataSets$ and $x \in \cX$, let $D^{-x}$ be the dataset obtained by replacing all shared storage states associated to $x$ by $\dummySharedStorage$.
The notion of queries, responses and states remain the same as in the case of ARA.

The following theorem has essentially the same proof as \Cref{thm:ara-sr-idp}, so we do not repeat it.
\begin{theorem}\label{thm:paa-sr-idp}
    $\MSR$ satisfies $(\eps_*, \delta_*)$-IDP with respect to aforementioned set of databases and neighbouring relation for PAA.
\end{theorem}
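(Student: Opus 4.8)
The plan is to repeat the proof of \Cref{thm:ara-sr-idp} almost verbatim: $\MSR$ is literally the same mechanism, and that proof only ever used the per-$\cX$ contribution/sparsity bookkeeping and the per-$\cY$ budget bookkeeping that $\MSR$ maintains, so nothing changes except the interpretation of a record and of the map $D \mapsto D^{-x}$. Fix an arbitrary sequence $(D_1,(\eps_1,\delta_1,Y_1,f_1)), (D_2,(\eps_2,\delta_2,Y_2,f_2)), \dots$ of databases and queries fed to $\MSR$ and write $(\{(L_{t,x},s_{t,x})\}_{x}, \{(\eps_{t,y},\delta_{t,y})\}_{y}, R_t)$ for the state after step $t$. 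Exactly as in the ARA proof we may assume $\MSR$ never returns \textbf{abort}: this event depends only on $(\eps_t,\delta_t,Y_t)$ and the accumulated budgets, not on whether the database at that step is $D_t$ or $D_t^{-x}$, so the two executions abort in lockstep.

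Fix $x \in \cX$, i.e.\ a (device, time-window) pair $(\device,t)$. Let $R_{x,t} \subseteq R_t$ be the set of aggregatable reports that phase~\#1 produced from records of $D_1,\dots,D_t$ with first coordinate $x$, let $Y_{x,t} = \{\, y \in Y_t : (y,k_y,v_y) \in R_{x,t}\,\}$ for each round, and put $Y_x = \bigcup_t Y_{x,t}$. Replacing the shared-storage snapshots attached to $x$ by $\dummySharedStorage$ only re-evaluates the query function on those records, so $R_{x,t}$ is precisely the set of reports whose contributions to the phase~\#3 key-sums can differ between the $D_t$-run and the $D_t^{-x}$-run; and the per-$x$ contribution/sparsity bounding of phase~\#1 forces $\sum_{y\in Y_x} v_y \le \contribBudget$ and $|Y_x| \le \sparsityBudget$ in both runs. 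Hence \Cref{fact:dlaplace-dp}, applied to the (truncated) discrete-Laplace sums of phase~\#3, gives $\function_t(D_t) \approx_{\eps_{x,t},\delta_{x,t}} \function_t(D_t^{-x})$ with $\eps_{x,t} := (\eps_t/\contribBudget)\sum_{y\in Y_{x,t}} v_y$ and $\delta_{x,t} := (\delta_t/\sparsityBudget)\,|Y_{x,t}|$, so $((\eps_{x,t},\delta_{x,t}))_t$ is a privacy rollout of $\MSR$ for $x$ on this input sequence.

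Bounding the rollout is then the same computation as in the ARA proof: interchanging the order of summation and using that phase~\#2 enforces $\sum_{t:\,y\in Y_t}\eps_t \le \eps_*$ for every $y$ (and $Y_{x,t}\subseteq Y_t$), followed by $\sum_{y\in Y_x} v_y \le \contribBudget$,
\[
\sum_t \eps_{x,t} \;=\; \sum_{y\in Y_x}\frac{v_y}{\contribBudget}\!\!\sum_{t\,:\,Y_{x,t}\ni y}\!\!\eps_t \;\le\; \sum_{y\in Y_x}\frac{v_y}{\contribBudget}\,\eps_* \;\le\; \eps_* ,
\]
and likewise, with $v_y/\contribBudget$ replaced by $1/\sparsityBudget$ and using $|Y_x|\le\sparsityBudget$, $\sum_t \delta_{x,t} \le \delta_*$. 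Since $x$ was arbitrary, \Cref{def:eps-delta-IDP} yields $(\eps_*,\delta_*)$-IDP.

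The only point that is not a literal copy of the ARA argument — and hence the one to be careful about — is the assertion that under the PAA adjacency (zeroing out, for a single $(\device,t)$, every shared-storage state it ever observed) the reports whose $(k,v)$ can change are exactly those in $R_{x,t}$, namely those generated from records with first coordinate $x$. This is what makes the per-$x$ trackers $L_x,s_x$ of \Cref{alg:summary-reports-interactive-mechanism} the correct sensitivity bounds to feed into \Cref{fact:dlaplace-dp}; once it is verified, the rest of the ARA proof transfers word for word.
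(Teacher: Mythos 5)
Your proposal is correct and takes essentially the same approach as the paper: the paper's entire proof of this theorem is the remark that it is "essentially the same" as the proof of \Cref{thm:ara-sr-idp}, and your write-up is exactly that proof transplanted to the PAA record/adjacency interpretation, with the one genuinely PAA-specific point (that only reports generated from records with first coordinate $x$ can change under $D \mapsto D^{-x}$) correctly identified and justified.
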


Hence, a corollary similar to \Cref{cor:ara-sr-dp} holds.
\begin{corollary}\label{cor:paa-sr-dp}
    $\MSR$ satisfies $(\eps_*, \delta_*)$-DP with respect to aforementioned set of databases and neighbouring relation for ARA.
\end{corollary}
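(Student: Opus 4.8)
The plan is to obtain \Cref{cor:paa-sr-dp} as an immediate consequence of two results already in hand, exactly paralleling the derivation of \Cref{cor:ara-sr-dp} from \Cref{thm:ara-sr-idp}. First I would invoke \Cref{thm:paa-sr-idp}, which asserts that $\MSR$ --- now run with the PAA instantiation of databases ($\cD_\text{PAA}$, records $(x,y) \in \cX \times \cY_\bot$ with $x$ a $(\device, t)$ pair) and the adjacency operation $D \mapsto D^{-x}$ that overwrites every shared-storage state tied to $x$ by $\dummySharedStorage$ --- satisfies $(\eps_*, \delta_*)$-IDP. Then I would apply \Cref{thm:individual-DP-to-DP}, which says $(\eps_*,\delta_*)$-IDP implies $(\eps_*,\delta_*)$-DP, to conclude that $\MSR$ is $(\eps_*,\delta_*)$-DP for the PAA notion of adjacency. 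That is the whole argument.

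The point worth spelling out is why no PAA-specific reasoning is needed at this last step: \Cref{thm:individual-DP-to-DP} is stated abstractly, for an arbitrary set $\cX$ of privacy units and an arbitrary ``removal'' operation $D \mapsto D^{-x}$ (handled via DP filters in the IDP setting), so it applies verbatim once the PAA instantiation is fixed. All of the PAA-specific structure --- that contribution and sparsity budgets are tracked per $(\device, t)$ rather than per source, and that the adjacency hides the third-party content of shared storage rather than the trigger-to-source attribution --- has already been folded into the IDP statement of \Cref{thm:paa-sr-idp}.

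Consequently the only real content sits in \Cref{thm:paa-sr-idp}, which the text states has ``essentially the same proof'' as \Cref{thm:ara-sr-idp}; the one thing I would double-check is that this is indeed the case. Concretely, I would verify that in the PAA rollout the per-step divergences can again be taken to be $\eps_{x,t} = (\eps_t/\contribBudget)\sum_{y \in Y_{x,t}} v_y$ and $\delta_{x,t} = (\delta_t/\sparsityBudget)\,|Y_{x,t}|$, where $Y_{x,t}$ is the set of reports in the $t$-th query that $\PAAClient$ attributed to the privacy unit $x = (\device, t')$, invoking \Cref{fact:dlaplace-dp} exactly as before; and that the bounds $\sum_{y \in Y_x} v_y \le \contribBudget$ and $|Y_x| \le \sparsityBudget$ that make the rollout sums collapse to $\eps_*$ and $\delta_*$ now follow from the budget checks $L_1(\device,t) + v \le \contribBudget$ and $L_0(\device,t) + 1 \le \sparsityBudget$ in \Cref{alg:paa-client}, which play precisely the role of the per-source checks in \Cref{alg:ara-client}. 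Once that correspondence is confirmed, \Cref{cor:paa-sr-dp} follows with no further work --- there is no genuine obstacle here, since the difficulty was already discharged in proving \Cref{thm:individual-DP-to-DP} and \Cref{thm:paa-sr-idp}.
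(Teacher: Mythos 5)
Your proposal matches the paper's own derivation exactly: \Cref{cor:paa-sr-dp} is obtained by combining \Cref{thm:paa-sr-idp} with \Cref{thm:individual-DP-to-DP}, mirroring how \Cref{cor:ara-sr-dp} follows from \Cref{thm:ara-sr-idp}. Your additional verification that the per-step divergences and budget bounds carry over to the PAA instantiation is consistent with the paper's remark that \Cref{thm:paa-sr-idp} has essentially the same proof as \Cref{thm:ara-sr-idp}.
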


Note however, that it might be desirable to have privacy guarantees per device, but the above only provides a privacy guarantee per (device, time window $t$). One could try to obtain a device-level DP guarantee by using the ``group privacy'' property of DP.

\begin{fact}[Group Privacy~\cite{vadhan17complexity}]\label{fact:group-privacy}
If distributions $P_0, P_1, \ldots, P_k$ are such that $P_i \approx_{\eps, \delta} P_{i+1}$ for all $i$, then the distributions $P_0 \approx_{\eps', \delta'} P_k$ for $ \eps' = k\eps$ and $\delta' = \delta \frac{e^{k\eps}-1}{e^\eps - 1}$.
\end{fact}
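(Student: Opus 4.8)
The plan is to prove Fact~\ref{fact:group-privacy} --- the classical group-privacy / chaining lemma --- by induction on $k$, simply composing the pairwise $(\eps,\delta)$-indistinguishability bounds along the chain $P_0, P_1, \ldots, P_k$. The base case $k=1$ is immediate: $P_0 \approx_{\eps,\delta} P_1$ is the hypothesis, and the claimed parameters are $\eps' = \eps$ and $\delta' = \delta \cdot \frac{e^{\eps}-1}{e^{\eps}-1} = \delta$. For the inductive step I would assume the statement for $k-1$, i.e.\ $P_0 \approx_{(k-1)\eps,\, \delta_{k-1}} P_{k-1}$ with $\delta_{k-1} = \delta \cdot \frac{e^{(k-1)\eps}-1}{e^{\eps}-1}$, and then splice in the last link $P_{k-1} \approx_{\eps,\delta} P_k$ from the hypothesis.

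\textbf{Key step.} Fix an arbitrary event $W$. Chaining the two inequalities (via \Cref{def:DP-indistinguishable}) gives
\[
P_0(W) \;\le\; e^{(k-1)\eps} P_{k-1}(W) + \delta_{k-1} \;\le\; e^{(k-1)\eps}\bigl(e^{\eps} P_k(W) + \delta\bigr) + \delta_{k-1} \;=\; e^{k\eps} P_k(W) + \bigl(e^{(k-1)\eps}\delta + \delta_{k-1}\bigr).
\]
The only actual computation is verifying that the accumulated additive term telescopes to the claimed $\delta'$:
\[
e^{(k-1)\eps}\delta + \delta\cdot\frac{e^{(k-1)\eps}-1}{e^{\eps}-1} \;=\; \delta\cdot\frac{e^{(k-1)\eps}(e^{\eps}-1) + e^{(k-1)\eps}-1}{e^{\eps}-1} \;=\; \delta\cdot\frac{e^{k\eps}-1}{e^{\eps}-1} \;=\; \delta'.
\]
Because the relation $\approx_{\eps,\delta}$ is symmetric in its two arguments (\Cref{def:DP-indistinguishable} imposes the bound in both directions), the identical chain of inequalities with $P_0$ and $P_k$ interchanged yields $P_k(W) \le e^{k\eps}P_0(W) + \delta'$, so $P_0 \approx_{k\eps,\,\delta'} P_k$, closing the induction.

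\textbf{Obstacle and a caveat.} I do not anticipate a genuine obstacle: the entire argument is bookkeeping of a geometric sum in the $\delta$ terms, and the hardest moment is the one-line telescoping identity above. The one point worth a remark is the degenerate case $\eps = 0$, where the factor $\frac{e^{k\eps}-1}{e^{\eps}-1}$ is a $0/0$ expression and should be read as its limiting value $k$; the induction then recovers the standard statement that $(0,\delta)$-indistinguishability chains to $(0,k\delta)$, and one can either carry the limit through the computation above or simply dispatch $\eps = 0$ as a separate trivial case.
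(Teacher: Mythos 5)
Your proof is correct. The paper itself gives no proof of this statement --- it is quoted as a known fact with a citation to \cite{vadhan17complexity} --- and your induction, with the telescoping geometric sum $e^{(k-1)\eps}\delta + \delta\frac{e^{(k-1)\eps}-1}{e^{\eps}-1} = \delta\frac{e^{k\eps}-1}{e^{\eps}-1}$ and the symmetric reverse chain, is exactly the standard argument one would supply; your remark about reading the factor as its limit $k$ when $\eps = 0$ is also the right way to handle the degenerate case.
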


However, it is tricky to apply this property because if the shared storage is disabled or cleared in a certain time window $t$, it can affect the future states of shared storage and thereby also affect the aggregatable reports generated by events on that device.

One can however formulate a weaker notion of device-level DP guarantee for PAA, by considering a variant of DP with gradual expiration that is considered in \cite{AHPSU24}.
\begin{definition}[DP for Interactive Mechanisms with Gradual Expiration]\label{def:xDP-with-expiration}
    An interactive mechanism $\mechanism$ satisfies \emph{$(\eps, \delta)$-DP with gradual expiration} if
    for all interactive adversaries $\adversary$, all $e \in \Z$, all devices $\device \in \deviceSet$ and all 
    $t_1, t_2 \in \cT$ with $t_1 < t_2$, the distributions 
    $\transcript{\mechanism^{-(\device, >t_1)}}{\adversary} \approx_{\eps', \delta'} \transcript{\mechanism^{-(\device, >t_2)}}{\adversary}$ 
    where $\eps' = \eps (t_2 - t_1)$, $\delta' = \delta \frac{e^{\eps'} - 1}{e^{\eps} - 1}$, and
    $\mechanism^{-(a, >t_1)}$ denotes the mechanism that replaces the dataset $D_t$ by $D_t^{-(\device, >t_1)}$ 
    (obtained by replacing all shared storage states associated to $(\device, t)$ for $t > t_1$ by $\dummySharedStorage$) at each step.
\end{definition}

It is immediate to see that if $\mechanism$ satisfies $(\eps, \delta)$-DP, then it satisfies $(\eps, \delta)$-DP with gradual expiration via \Cref{fact:group-privacy}. 
Thus, we get that in the context of PAA, the mechanism $\MSR$ satisfies $(\eps_*, \delta_*)$-DP with gradual expiration.

\subsection{Proof of \texorpdfstring{\Cref{thm:individual-DP-to-DP}}{Theorem~\ref{thm:individual-DP-to-DP}}}\label{subsec:idp-to-dp-proof}

As mentioned before, to prove \Cref{thm:individual-DP-to-DP}, we apply the tool of DP filters~\cite{rogers16privacy} in the setting of IDP~\cite{ESS15,FZ21}, that we discuss in \Cref{subsec:filter,subsec:indv-priv} below respectively.

\subsubsection{Privacy Filters}
\label{subsec:filter}

To discuss privacy filters, consider databases $U \subseteq \cX$ that contain only privacy units.
We use $U^{-x}$ to denote the dataset $U \smallsetminus \set{x}$.\footnote{%
	The notation $U^{-x}$ is useful only when $x \in U$. 
	Otherwise $U^{-x} = U$, which can lead to vacuous statements.
	Since these are also correct we do not enforce that $x \in U$.
}

For $\Theta = \R_{\ge 0} \times [0, 1]$ being the parameter space underlying $(\eps, \delta)$-DP, let $\filter : \Theta^* \to \{\thumbsup, \thumbsdown\}$
be a function, called {\em filter}, that maps a sequence of parameters to $\thumbsup$ or $\thumbsdown$.
For any $\ResponseSet$, and $\cP(\cX)$ denoting the powerset of $\cX$, we consider the following notion of a 
\emph{universal interactive mechanism} 
$\mechanism_{\filter} : \StateSet \times \cP(\cX) \times \QuerySetU \to \StateSet \times \Delta((\ResponseSet \cup \{\bot\}))$, 
parameterized by $\filter$, that performs on-the-fly privacy budgeting defined as follows:
\begin{itemize}
	\item $\QuerySetU := \{ \query: \cP(\cX) \to \Delta(\ResponseSet) \}$ consists of ``universal queries'' 
	that can be arbitrary ``one-shot'' mechanisms; an example of such a mechanism is 
	$q(U) = \sum_{x \in U} v_x + \DLap(a)$, where $v_x$ and $a$ are chosen as part of the query,
	\item $\StateSet = \Theta^*$ consists of sequences of privacy parameters.
\end{itemize}
On query $q$ such that $q$ satisfies $\theta$-DP,\footnote{$q$ may satisfy $\theta$-DP 
	for a number of different $\theta \in \Theta$, and while any such $\theta$ could be used, it is 
	important to specify a particular choice of $\theta$ to make $\mechanism_{\filter}$ well-defined. 
	The definition of $\cQ$ could be modified to additionally provide the specific $\theta$ along with $q$, 
	but we avoid doing so for simplicity.} $\mechanism_{\filter}$ operates as follows:
	\begin{multline*}
		\mechanism_{\filter}((\theta_1, \ldots, \theta_t), D, q)\\
		:=
		\begin{cases}
			((\theta_1, \ldots, \theta_t, \theta), q(D)) & \text{if } \filter(\theta_1, \ldots, \theta_t, \theta) = \thumbsup\\
			((\theta_1, \ldots, \theta_t, \mathbf{0}), \bot) & \text{if } \filter(\theta_1, \ldots, \theta_t, \theta) = \thumbsdown
		\end{cases}
	\end{multline*}

That is, if $\filter$ applied on the current state (sequence of $\theta_i$'s so far) concatenated with
the current $\theta$ returns $\thumbsup$, then $\theta$ is concatenated to the current state, and the 
one-shot mechanism $q$ is applied on $U$. 
But if not, then $\mathbf{0}$ is concatenated to the state and $\bot$ is returned.

The interactive mechanism $\mechanism_{\filter}$ interacts with an adversary 
$\adversary$
in the same way as in \Cref{alg:general-transcript} to produce $\transcript{\mechanism_\filter}{\adversary}$.

For $(\eps, \delta) \in \Theta$, we define the filter $\filter_{\eps,\delta}: \Theta^* \to \{\thumbsup, \thumbsdown\}$ as:
\[
    \filter_{\eps, \delta}((\eps_1, \delta_1), \dots, (\eps_n, \delta_n)) := \begin{cases}
    	\thumbsup & \text{if } \sum\limits_{i = 1}^n \eps_i \le \eps \text{ \& } \sum\limits_{i = 1}^n \delta_i \le \delta\\
    	\thumbsdown & \text{otherwise.}
    \end{cases}
\]

It is known that $\mechanism_{\filter}$ with the above filter satisfies DP guarantees.
\begin{lemma}[\cite{rogers16privacy}]\label{lem:filters}
	For all $(\eps, \delta) \in \Theta$, $\mechanism_{\filter_{\eps, \delta}}$ satisfies $(\eps, \delta)$-DP against stable adversaries.%
	\footnote{%
		\cite{rogers16privacy} also provides an improved ``advanced composition''-like filter for 
		$(\eps, \delta)$-DP, that was subsequently improved in~\cite{whitehouse23fully}. 
		However, since we only use this version, we do not state the advanced version.
	}
\end{lemma}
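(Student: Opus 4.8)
The plan is to prove this by a round‑by‑round induction that re‑derives basic composition one step at a time, with the filter $\filter_{\eps,\delta}$ doing exactly the bookkeeping needed so that the declared per‑round parameters never sum past $(\eps,\delta)$. Fix a stable adversary $\adversary$ and a privacy unit $x\in\cX$; by stability the database is a fixed $U\in\cP(\cX)$ in every round, so $\mechanism_{\filter_{\eps,\delta}}$ runs on $U$ while $\mechanism_{\filter_{\eps,\delta}}^{-x}$ runs on $U^{-x}=U\smallsetminus\set{x}$ (if $x\notin U$ the two transcripts coincide and there is nothing to prove). Since adversaries are deterministic, the query issued at an empty history is a fixed element of $\QuerySetU$, and because a query $q$ comes with a declared $\theta=(\eps',\delta')$ for which $q$ is $\theta$-DP we have $q(U)\approx_{\eps',\delta'}q(U^{-x})$. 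In other words, whenever a round actually runs a query it is an approximate‑DP step whose parameters are known before that round begins.

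The heart of the argument is the following strengthened claim, proved by induction on $k\ge 0$: \emph{for every state $\sigma\in\Theta^*$ with $\hat\eps(\sigma)\le\eps$ and $\hat\delta(\sigma)\le\delta$ — where $\hat\eps(\sigma),\hat\delta(\sigma)$ denote the sums of the first, resp.\ second, coordinates of $\sigma$ — and for every stable adversary, the distributions of the first $k$ responses produced by $\mechanism_{\filter_{\eps,\delta}}$ started in state $\sigma$, on $U$ and on $U^{-x}$, are $(\eps-\hat\eps(\sigma),\,\delta-\hat\delta(\sigma))$-indistinguishable.} The base case $k=0$ is immediate (identical point masses, and $\approx_{0,0}$ implies $\approx_{\eps'',\delta''}$ for any $\eps'',\delta''\ge 0$); the same reasoning covers the case where the adversary halts at the current history. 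For the step, let $q$ be the (deterministic) first query with declared $\theta=(\eps',\delta')$. If $\filter_{\eps,\delta}(\sigma\circ\theta)=\thumbsdown$ (i.e.\ $\hat\eps(\sigma)+\eps'>\eps$ or $\hat\delta(\sigma)+\delta'>\delta$), the first response is $\bot$ in both executions, the state becomes $\sigma\circ\mathbf 0$ with unchanged running sums, and the inductive hypothesis applied to state $\sigma\circ\mathbf 0$ and the conditioned adversary closes the step. If $\filter_{\eps,\delta}(\sigma\circ\theta)=\thumbsup$, then $r\sim q(U)$ resp.\ $q(U^{-x})$ with $q(U)\approx_{\eps',\delta'}q(U^{-x})$, and the state becomes $\sigma'=\sigma\circ\theta$ — the \emph{same} value for every realization of $r$, since $\theta$ is a constant — with $\hat\eps(\sigma')=\hat\eps(\sigma)+\eps'\le\eps$ and $\hat\delta(\sigma')=\hat\delta(\sigma)+\delta'\le\delta$. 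The conditional distribution of the remaining $k$ responses given $r$ is exactly that of the first $k$ responses of $\mechanism_{\filter_{\eps,\delta}}$ started in state $\sigma'$ with the adversary conditioned on $r$, so by the inductive hypothesis it is $(\eps-\hat\eps(\sigma'),\,\delta-\hat\delta(\sigma'))$-indistinguishable between $U$ and $U^{-x}$, and this bound is uniform in $r$. Standard two‑fold adaptive basic composition of approximate DP (first step with the fixed parameters $(\eps',\delta')$, followed by the conditionally‑chosen tail whose parameters $(\eps-\hat\eps(\sigma)-\eps',\,\delta-\hat\delta(\sigma)-\delta')$ are again fixed in advance) then gives that the first $k+1$ responses are $(\eps-\hat\eps(\sigma),\,\delta-\hat\delta(\sigma))$-indistinguishable, completing the induction.

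Instantiating the claim at the initial state $\mechState_0=()$, where $\hat\eps=\hat\delta=0$, shows that for every $k$ the length‑$k$ prefixes of $\transcript{\mechanism_{\filter_{\eps,\delta}}}{\adversary}$ and $\transcript{\mechanism_{\filter_{\eps,\delta}}^{-x}}{\adversary}$ are $(\eps,\delta)$-indistinguishable. To conclude for the full (possibly unbounded‑length) transcript, I would invoke that the $\sigma$-algebras generated by the length‑$k$ prefixes increase to the $\sigma$-algebra on transcripts, so any transcript event is approximated in both measures by a prefix event; $(\eps,\delta)$-indistinguishability therefore passes to the limit. Since this holds for every stable $\adversary$ and every $x\in\cX$, $\mechanism_{\filter_{\eps,\delta}}$ satisfies $(\eps,\delta)$-DP against stable adversaries, recovering the result of \cite{rogers16privacy}.

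The main obstacle is choosing the inductive invariant correctly: it must be phrased in terms of the \emph{remaining} budget $(\eps-\hat\eps(\sigma),\,\delta-\hat\delta(\sigma))$, and one must notice that after re‑rooting the process at the current state the next round's declared parameter $\theta$ is a constant — which is precisely what lets one invoke ordinary two‑fold composition at each step instead of needing a composition theorem that already tolerates adaptively chosen privacy parameters. The remaining points are routine: checking that the $\thumbsdown$ branch leaves the running sums unchanged and emits a response that does not depend on the sensitive data, and the standard prefix‑to‑limit passage for transcripts of unbounded length.
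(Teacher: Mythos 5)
The paper gives no proof of \Cref{lem:filters}; it is imported as a black box from \cite{rogers16privacy}, so there is nothing in the text to compare your argument against. Judged on its own, your proof is correct and is essentially the standard derivation of the basic-composition filter. The two load-bearing observations are exactly right: first, because the adversary is deterministic and stable, the declared parameter $\theta$ of the next query is a constant once the history is fixed, so the filter's accept/reject decision is data-independent and identical in the two worlds $U$ and $U^{-x}$, and the post-round state $\sigma\circ\theta$ (or $\sigma\circ\mathbf 0$) is the same for every realization of the response; second, phrasing the inductive invariant in terms of the \emph{remaining} budget $(\eps-\hat\eps(\sigma),\,\delta-\hat\delta(\sigma))$ is what lets each step be closed by plain two-fold adaptive composition rather than a composition theorem for adaptively chosen privacy parameters. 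Two ingredients you invoke as ``standard'' genuinely are, but would need to be cited or proved in a full write-up: (i) the two-fold composition lemma for approximate DP --- a first marginal that is $(\eps',\delta')$-indistinguishable followed by conditionals that are \emph{uniformly} $(\eps'',\delta'')$-indistinguishable yields an $(\eps'+\eps'',\delta'+\delta'')$-indistinguishable joint --- whose handling of the $\delta$ terms is the only non-trivial calculation in the whole argument; and (ii) the prefix-to-limit step for unbounded transcripts, which works via approximation of events in the product $\sigma$-algebra by cylinder events under the measure $P+Q$ (or is moot if one assumes the adversary halts after finitely many rounds, as \Cref{alg:general-transcript} implicitly does). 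I see no gap.
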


\subsubsection{Individual Differential Privacy}
\label{subsec:indv-priv}

We now consider a universal interactive mechanism with a privacy filter but using the notion of IDP~\cite{ESS15,FZ21}.

\begin{definition}
	\label{def:individual-DP}
	For $p : \cX \to \Theta$, a mechanism $\function: \cP(\cX) \to \Delta(\ResponseSet)$ satisfies 
	\emph{$p$-IDP} if for all $U \in \cP(\cX)$ and all $x \in \cX$, it holds that $\function(U) \approx_{p(x)} \function(U^{-x})$.%
\end{definition}

As before, for $\filter : \Theta^* \to \{\thumbsup, \thumbsdown\}$, we consider a universal interactive mechanism 
$\indmechanism_{\filter} : \StateSet \times \cP(\cX) \times \QuerySetU \to \StateSet \times \Delta(\ResponseSet)$,
that performs on-the-fly privacy budgeting, where
$\StateSet = (\Theta^{\cX})^*$ consists of sequences of privacy parameters, one for each unit $x \in \cX$.
On query $q$ such that $q$ satisfies $p$-IDP, $\indmechanism_{\filter}$ operates as follows. 
On current state $(p_1, \ldots, p_t)$, it constructs a ``masked database''
$\overline{U} := U \cap \{ x : \filter(p_1(x), \ldots, p_t(x), p(x)) = \thumbsup \}$
and the consumed individual privacy $p_{t+1} : \cX \to \Theta$ as:
\[
p_{t+1}(x) := \begin{cases}
	p(x) & \text{if } \filter(p_1(x), \ldots, p_t(x), p(x)) = \thumbsup\\
	\mathbf{0} & \text{if } \filter(p_1(x), \ldots, p_t(x), p(x)) = \thumbsdown.
\end{cases}
\]
And the mechanism returns,
\[
\mechanism_{\filter}((p_1, \ldots, p_t), U, q) := ((p_1, \ldots, p_{t+1}), q(\overline{U})).
\]

\begin{lemma}\label{lem:indv-filter}
	For all $(\eps, \delta) \in \Theta$, $\indmechanism_{\filter_{\eps, \delta}}$ satisfies $(\eps, \delta)$-DP against stable adversaries.
\end{lemma}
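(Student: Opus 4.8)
The plan is to derive \Cref{lem:indv-filter} from \Cref{lem:filters} by a simulation that ``projects'' the per‑unit accounting of $\indmechanism_{\filter_{\eps,\delta}}$ onto a single privacy unit, in the spirit of the passage from composition to individual‑privacy composition in~\cite{FZ21}. Fix a stable adversary $\adversary$; being stable, it outputs the same database $U \in \cP(\cX)$ at every round. Fix the unit $x^* \in \cX$; if $x^* \notin U$ then $U^{-x^*} = U$ and there is nothing to show, so assume $x^* \in U$. It then suffices to prove $\transcript{\indmechanism_{\filter_{\eps,\delta}}}{\adversary} \approx_{\eps,\delta} \transcript{(\indmechanism_{\filter_{\eps,\delta}})^{-x^*}}{\adversary}$. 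The structural fact I would exploit is that, in the masked‑database recursion, the ``fate'' of a unit $x$ --- whether $x \in \overline{U}$ at each round, and the sequence of parameters charged to it --- is determined only by whether $x \in U$, by $\filter_{\eps,\delta}$, and by the individual‑privacy functions of the queries issued so far evaluated at $x$, and is otherwise independent of which other units are present. In particular, for every $x \ne x^*$ this is one and the same function of the response transcript in the $U$‑run and in the $U^{-x^*}$‑run.

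For the reduction I would construct a stable adversary $\adversary'$ for the (non‑individual) universal mechanism $\mechanism_{\filter_{\eps,\delta}}$, with fixed database $\{x^*\}$. Internally $\adversary'$ runs a copy of $\adversary$ together with the portion of $\indmechanism_{\filter_{\eps,\delta}}$'s state concerning the units of $U \setminus \{x^*\}$, updated from the (post‑processed) responses it receives. When $\adversary$ issues a query $q$ --- which satisfies $p_q$‑IDP for a designated $p_q : \cX \to \Theta$ --- $\adversary'$ has already computed the part $\widetilde{U} \subseteq U \setminus \{x^*\}$ of the current masked database contributed by units other than $x^*$, and it forwards to $\mechanism_{\filter_{\eps,\delta}}$ the one‑shot query $q' : V \mapsto q(\widetilde{U} \cup V)$ on $V \subseteq \{x^*\}$, declared with parameter $p_q(x^*)$. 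This $q'$ is legal for $\mechanism_{\filter_{\eps,\delta}}$: since $q$ satisfies $p_q$‑IDP, $q'(\{x^*\}) = q(\widetilde{U} \cup \{x^*\}) \approx_{p_q(x^*)} q(\widetilde{U}) = q'(\emptyset)$, so $q'$ satisfies $p_q(x^*)$‑DP. If $\mechanism_{\filter_{\eps,\delta}}$ replies $\bot$ (its filter returned $\thumbsdown$ on the running sequence of parameters charged to $x^*$), then $\adversary'$ overwrites the reply with a fresh sample of $q'(\emptyset) = q(\widetilde{U})$ before passing it to its internal copy of $\adversary$; otherwise it forwards the reply verbatim.

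The next step is to check, by a round‑by‑round induction, that a single randomized post‑processing $\Phi$ --- which re‑runs $\adversary$ on its own output and, at round $t$, replaces a $\bot$ by a fresh sample of $q_t(\widetilde{U}_t)$, everything recomputed from the already‑processed prefix --- carries $\transcript{\mechanism_{\filter_{\eps,\delta}}}{\adversary'}$ to $\transcript{\indmechanism_{\filter_{\eps,\delta}}}{\adversary}$ and $\transcript{\mechanism_{\filter_{\eps,\delta}}^{-x^*}}{\adversary'}$ to $\transcript{(\indmechanism_{\filter_{\eps,\delta}})^{-x^*}}{\adversary}$. The crux is that in both systems $\filter_{\eps,\delta}$ is applied to exactly the sequence $p_{q_1}(x^*), p_{q_2}(x^*), \ldots$ of parameters attached to $x^*$, matching the $x^*$‑branch of the $\indmechanism_{\filter}$ recursion, while $\widetilde{U}_t$ is computed by the same rule on both sides; then a short case analysis (filter $\thumbsup$ with mechanism‑database $\{x^*\}$; filter $\thumbsup$ with mechanism‑database $\emptyset$; filter $\thumbsdown$) shows the reply delivered to $\adversary$ at round $t$ is distributed as $q_t(\overline{U}_t)$ for the matching true database ($U$, respectively $U^{-x^*}$). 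Since $\adversary'$ is stable, \Cref{lem:filters} gives $\transcript{\mechanism_{\filter_{\eps,\delta}}}{\adversary'} \approx_{\eps,\delta} \transcript{\mechanism_{\filter_{\eps,\delta}}^{-x^*}}{\adversary'}$, and $(\eps,\delta)$‑indistinguishability is preserved under the common post‑processing $\Phi$; hence $\transcript{\indmechanism_{\filter_{\eps,\delta}}}{\adversary} \approx_{\eps,\delta} \transcript{(\indmechanism_{\filter_{\eps,\delta}})^{-x^*}}{\adversary}$, as required.

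I expect the main obstacle to be the two consistency checks in the last step: first, that $q'$ genuinely satisfies $p_q(x^*)$‑DP --- this is exactly where the $p_q$‑IDP property of $q$ is invoked, instantiated with the ``base'' database $\widetilde{U}$; second, reconciling $\indmechanism_{\filter}$, which never emits $\bot$ but merely drops a unit from the masked database once its individual budget is exhausted, with $\mechanism_{\filter}$, which does emit $\bot$. The latter is what forces the post‑processing step, and with it the bookkeeping needed to guarantee that the response stream fed to $\adversary'$'s internal copy of $\adversary$ is precisely the post‑processed transcript, so that the recursively computed $\widetilde{U}_t$ coincides with the one $\indmechanism_{\filter}$ would actually use. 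None of this is conceptually deep, but the indexing has to be tracked carefully.
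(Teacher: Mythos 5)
Your proof takes essentially the same route as the paper's: reduce to the single-unit universal mechanism $\mechanism_{\filter_{\eps,\delta}}$ on $\cX' = \{x^*\}$ via a simulating stable adversary that hard-codes the masked contributions of the remaining units into the forwarded one-shot query (whose $p_q(x^*)$-DP legality follows from the $p_q$-IDP of $q$), and then invoke \Cref{lem:filters}. Your write-up is in fact more careful than the paper's one-line ``$\equiv$'' identification, since you explicitly reconcile the $\bot$ responses emitted by $\mechanism_{\filter}$ with the silent unit-masking performed by $\indmechanism_{\filter}$ through a post-processing step.
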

\begin{proof}
	Fix some database $U \subseteq \cX$.
	Consider a stable adversary $\adversary : \ResponseSet^* \to (\cP(\cX) \times \QuerySetU) \cup 
	\{\halt\}$ that returns the same database $U$ on each step.
	We want to show that for any $x \in \cX$ it holds that
	$\transcript{\indmechanism_{\filter_{\theta}}}{\adversary} \approx_{\eps, \delta} \transcript{\left(\indmechanism_{\filter_{\theta}}\right)^{-x}}{\adversary}$.
	This follows by ``zooming in on unit $x$''. Namely, consider $\cX' = \set{x}$, and let 
	$\QuerySetU' := \set{q : \cX' \to \Delta(\ResponseSet)}$ be the set of universal queries on $\cX'$. 
	Construct an adversary $\adversary' : \ResponseSet^* \to (\cP(\cX') \times \QuerySetU') \cup 
	\{\halt\}$ as follows: 
	$\adversary'(\Pi)$ first computes $(D, q) \gets \adversary(\Pi)$, and return $(D', q')$ where $D' = D \cap \cX'$ and $q'(U') := q(\tilde{U}^{-x} \cup U')$ for 
	$U' \subseteq \cX'$ and $\tilde{U}^{-x} \subseteq U^{-x}$ that have the privacy budget to participate.
	Thus, when $U \ni x$, we have
	$\transcript{\indmechanism_{\filter_{\theta}}}{\adversary} \equiv 
	\transcript{\mechanism_{\filter_{\theta}}}{\adversary'} \approx_{\eps, \delta} 
	\transcript{\left(\mechanism_{\filter_{\theta}}\right)^{-x}}{\adversary'} \equiv
	\transcript{\left(\indmechanism_{\filter_{\theta}}\right)^{-x}}{\adversary}$, by \Cref{lem:filters}.
\end{proof}

\subsubsection{Putting it Together: Proof of \texorpdfstring{\Cref{thm:individual-DP-to-DP}}{Theorem~\ref{thm:individual-DP-to-DP}}}

\begin{proof}[Proof of \Cref{thm:individual-DP-to-DP}]
    Let $\mechanism$ be an interactive mechanism satisfying $(\eps_*, \delta_*)$-IDP guarantees.
    To prove the statement it suffices to show that for any adversary $\adversary$ that interacts with $\mechanism$, there is an adversary $\adversary'$ for
    $\mechanism_{\filter_{\eps_*, \delta_*}}$ such that
    \begin{align*}
        \transcript{\indmechanism_{\filter_{\eps_*, \delta_*}}}{\adversary'} &\equiv \transcript{M}{\adversary}, \text{ and}\\
        \transcript{\left(\indmechanism_{\filter_{\eps_*, \delta_*}}\right)^{-x}}{\adversary'} &\equiv \transcript{M^{-x}}{\adversary}.
    \end{align*}
    Conditioned on a sequence  $(r_1, \ldots, r_t)$ of responses, let the corresponding states of $\mechanism$ when interacting with $\adversary$ be $S_0, S_1, \ldots, S_{t}$; note $S_i$ is deterministic given database $D_i$ and query $q_i$, which are in turn deterministic given $r_1, \ldots, r_{i-1}$.
    
    We define $\adversary'$ that on input $(r_1, \ldots, r_t)$, computes $(D_{t+1}, q_{t+1}) \gets \adversary(r_1, \ldots, r_t)$ and returns $(D_{t+1}, q'_{t+1})$ where $q'_{t+1}(\tilde D)$ is distribution over responses $r$ returned by $\mechanism(S_t, \tilde D, q_{t+1})$.
    It is easy to see that $\adversary'$ satisfy the condition since $\mechanism_{\filter_{\eps_*, \delta_*}}$ will always output $q'(D_{t+1})$ given  $\mechanism$ is $(\eps_*, \delta_*)$-IDP, that is, the filter never masks any element of the database; 
    hence from \Cref{lem:indv-filter}, we conclude that $\transcript{M}{\adversary} \approx_{\eps_*, \delta_*} \transcript{M^{-x}}{\adversary}$.
\end{proof}

\section{Analysis of Event-Level Reports}
\label{sec:event-level-reports}
The database and the privacy unit in event-level reports is the same as described in \Cref{subsec:ara-sr-privacy}.

The event-level API is based on what we refer to as
\emph{interactive randomized response (IRR)}; see \Cref{alg:irr} for details.
In this setting, we have finite set  $\rroutput_1, \dots, \rroutput_i, \ldots$ of outputs 
at each time step and a finite set 
$\rroutputvalid \subseteq \rroutput_1 \times \dots \times \rroutput_i \times \cdots$ 
of valid combinations of these outputs.  
The algorithm decides (randomly) at the very beginning of the run whether it is going to report truthfully 
(i.e., $\randomizedoutput = \perp$) or whether it is going to report some other output (i.e., $\randomizedoutput \in \rroutputvalid$).
In the latter case, the algorithm simply reports based on $\randomizedoutput$ regardless of the input. 
On the other hand, in the former case, the algorithm evaluates the query it receives and outputs truthfully in each step.

It is possible to see that event-level reports can be captured by the interactive mechanism
$\MER$. 
Indeed, let us choose $\rroutputvalid^{(x)}$ so
that $\rroutputvalid^{(x)}_{i}$ 
is the set of all combinations of possible 
reports that could be sent at $i$th second.
Then $q_x$ on $i$th iteration is the function that
creates the reports that would be sent on $i$th second.
(Note that we allow $q_x$ to depend on all the events, 
not just the one created during this second, this makes 
our privacy guarantee stronger than actually necessary.)

The main result of this section is that this is indeed $\eps$-DP.

\begin{theorem}
    \label{theorem:event-level-privacy}
    $\MER$ satisfies $\eps$-DP.
\end{theorem}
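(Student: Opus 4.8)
The plan is to factor the argument through the single-source primitive --- interactive randomized response (IRR, \Cref{alg:irr}): first show that IRR is $\eps$-DP on its own, and then reduce $\eps$-DP of $\MER$ to this by absorbing every source other than the one being removed into the adversary. For the single-source claim, fix the source $x$ whose records are removed, write $K := |\rroutputvalid^{(x)}|$, fix any interactive adversary $\adversary$, and compare the transcript distribution $P_{\mathrm{real}}$ of the run where $x$'s IRR instance is fed the triggers attributed to $x$ with the distribution $P_{\mathrm{shadow}}$ of the run where it is fed the $D^{-x}$ input (nothing attributed to $x$). The key structural fact is that IRR flips a single coin $\randomizedoutput$ at the very start. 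When $\randomizedoutput = \perp$ (probability $\tfrac{e^\eps-1}{e^\eps+K-1}$), every query is answered by the \emph{deterministic} noiseless client of \Cref{alg:noiseless-event-level-report}, so --- even though $\adversary$ picks databases and queries adaptively --- the whole transcript is the unique ``honest transcript'', which I call $\pi^{\mathrm{real}}$ and $\pi^{\mathrm{shadow}}$ in the two runs. When $\randomizedoutput = s \in \rroutputvalid^{(x)}$ (probability $\tfrac{1}{e^\eps+K-1}$ each), the responses are the coordinates of $s$ regardless of the input, so both $\adversary$'s choices and the transcript are fixed functions of $s$ alone, \emph{identical} in the two runs.

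Consequently, letting $\cO_\pi \subseteq \rroutputvalid^{(x)}$ be the (run-independent) set of $s$ whose non-truthful run produces transcript $\pi$, we get $P_{\mathrm{real}}(\pi) = \tfrac{|\cO_\pi|}{e^\eps+K-1}$ plus an extra $\tfrac{e^\eps-1}{e^\eps+K-1}$ exactly when $\pi = \pi^{\mathrm{real}}$, and symmetrically for $P_{\mathrm{shadow}}$. The bound comes out cleanly once one notes that the honest transcript is itself always a realizable non-truthful outcome: $\pi^{\mathrm{real}}$ is a prefix of some valid combination $s \in \rroutputvalid^{(x)}$, and setting $\randomizedoutput = s$ shows $\adversary$ exactly the same responses and hence the same halting point, so $s \in \cO_{\pi^{\mathrm{real}}}$ and $|\cO_{\pi^{\mathrm{real}}}| \ge 1$ (and likewise for $\pi^{\mathrm{shadow}}$). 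Therefore, after cancelling the common denominator, for every transcript $\pi$,
\[
\frac{P_{\mathrm{real}}(\pi)}{P_{\mathrm{shadow}}(\pi)} \;\le\; \frac{(e^\eps - 1) + |\cO_\pi|}{|\cO_\pi|} \;=\; 1 + \frac{e^\eps - 1}{|\cO_\pi|} \;\le\; e^\eps
\]
when $\pi = \pi^{\mathrm{real}}$, and the ratio is at most $1$ otherwise; by symmetry the same holds with the two runs swapped, and summing over $\pi \in W$ gives $P_{\mathrm{real}}(W) \le e^\eps P_{\mathrm{shadow}}(W)$ (and conversely) for every event $W$. Hence IRR is $\eps$-DP.

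To reduce $\MER$ to this, observe that $\MER$ runs one independent IRR instance per source and that the reports attributed to source $x$ depend only on that instance's input stream; passing from $D_t$ to $D_t^{-x}$ (relabelling $x$'s records to $\dummyImpression$) changes only that stream, and leaves $\rroutputvalid^{(x)}$ --- determined by $x$'s source configuration --- unchanged. Given $\adversary$ and $x$, I would build an adversary $\adversary'$ for $x$'s IRR instance alone that internally simulates all other sources with its own fresh randomness and forwards to the real instance only the part of each query concerning $x$; then $\transcript{\MER}{\adversary}$ and $\transcript{\MER^{-x}}{\adversary}$ are the same randomized post-processing of, respectively, the real- and shadow-run transcripts of $x$'s instance. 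The single-source claim above, together with closure of $\approx_{\eps,0}$ under post-processing, then yields $\transcript{\MER}{\adversary} \approx_{\eps,0} \transcript{\MER^{-x}}{\adversary}$ for all $\adversary$ and $x$, which is exactly $\eps$-DP (\Cref{def:xDP}).

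The part I expect to be most delicate is the interactivity in the first step: the collapse of the truthful branch to a single transcript genuinely uses that the noiseless event-level client is deterministic (so adaptivity in the choice of both queries \emph{and} databases is harmless), and the small combinatorial observation that the honest transcript can always be re-produced as a non-truthful outcome with the same halting behavior --- hence $|\cO_{\pi^{\mathrm{real}}}| \ge 1$ --- is precisely what is needed to get the $e^\eps$ factor rather than something larger. Making the reduction in the third step fully rigorous --- that the other IRR instances are independent of $x$'s and can be simulated inside $\adversary'$, and that $\rroutputvalid^{(x)}$ really is invariant under $D \to D^{-x}$ --- is the other thing to be careful with, but it is conceptually routine given the per-source product structure.
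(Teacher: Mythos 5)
Your proposal is correct and follows essentially the same two-step route as the paper: first bound the single-source IRR transcript probabilities by splitting on the truthful coin versus the input-independent non-truthful branch (your $|\cO_\pi|\ge 1$ observation plays exactly the role of the paper's lower bound $\Pr[\text{transcript}=\tilde o]\ge\Pr[\randomizedoutput=\tilde o]$), and then reduce the multi-source case by absorbing the other sources' independent IRR instances into a randomized adversary, which the paper formalizes via the joint-convexity fact for mixtures of adversaries. Your single-source accounting is in fact slightly more careful than the paper's about early halting and transcript multiplicities, but the argument is the same.
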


To prove this theorem, one may notice that while our definition of DP for interactive mechanisms
assumed that the adversary is deterministic it is not strictly necessary.
This follows from a simple {\em joint-convexity} property of DP.

\begin{fact}[Joint Convexity (see e.g. Lemma B.1 in \cite{chua24dpsgd})]\label{fact:joint-convexity}
Given two families of distributions $\{P_i\}_{i}$ and $\{Q_i\}_{i}$, if $P_i \approx_{\eps, \delta} Q_i$ for all $i$, then for all mixture distributions $P = \sum_i \alpha_i P_i$ and $Q = \sum_i \alpha_i Q_i$, it holds that $P \approx_{\eps, \delta} Q$.
\end{fact}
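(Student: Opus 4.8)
The plan is to unfold the definition of $(\eps,\delta)$-indistinguishability (\Cref{def:DP-indistinguishable}) and push the two defining inequalities through the mixture by linearity. Fix an arbitrary event $W$. By definition it suffices to establish $P(W) \le e^\eps Q(W) + \delta$ together with $Q(W) \le e^\eps P(W) + \delta$. Since the hypothesis $P_i \approx_{\eps,\delta} Q_i$ is itself symmetric in $P_i$ and $Q_i$, the second inequality follows from the first by swapping the roles of $P$ and $Q$ verbatim, so I would only write out the first.

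For the first inequality I would use that $P$ is the $\alpha$-mixture of the $P_i$ and that the probability assigned to a fixed event is linear in the distribution, giving $P(W) = \sum_i \alpha_i P_i(W)$. The hypothesis supplies $P_i(W) \le e^\eps Q_i(W) + \delta$ for every index $i$ simultaneously, for this same $W$. Substituting and using that the weights $\alpha_i$ are nonnegative yields
\[
P(W) = \sum_i \alpha_i P_i(W) \;\le\; \sum_i \alpha_i\bigl(e^\eps Q_i(W) + \delta\bigr) \;=\; e^\eps \sum_i \alpha_i Q_i(W) + \delta \sum_i \alpha_i .
\]
I would then invoke that the $\alpha_i$ are genuine mixture weights, i.e. $\sum_i \alpha_i = 1$, so the first sum is exactly $Q(W)$ and the second contributes precisely $\delta$, establishing $P(W) \le e^\eps Q(W) + \delta$.

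There is essentially no obstacle here beyond bookkeeping, which is why the statement is labeled a \emph{fact}. The only points warranting care are that the \emph{same} event $W$ and the \emph{same} weights $\alpha_i$ appear on both sides (so linearity applies cleanly and the per-index inequalities can be summed term by term), and that the normalization $\sum_i \alpha_i = 1$ is used at the end — without it the additive slack would scale to $\delta \sum_i \alpha_i$ rather than remaining $\delta$. If the index set is countably infinite, one would additionally note that interchanging the sum with the inequality is justified by nonnegativity of all terms, but this is routine.
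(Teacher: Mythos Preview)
Your proof is correct and is exactly the standard linearity-of-mixtures argument one would expect. The paper does not actually prove this statement; it is stated as a fact with a citation, so there is nothing further to compare against.
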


We extend the notion of a transcript to support distributions of adversaries (\Cref{alg:general-transcript-supporting-randmization}).
For any distribution $\adversaryDistribution$ over interactive adversaries, by applying the above fact for $P_{\adversary} = \transcript{\mechanism}{\adversary}$ and $Q_{\adversary} = \transcript{\mechanism^{-x}}{\adversary}$ for each $\adversary$ in the support of $\adversaryDistribution$, we get the following corollary.
\begin{corollary}
For all mechanisms $\mechanism$, if for all $x \in \cX$ and all interactive adversaries it holds that $\transcript{\mechanism}{\adversary} \approx_{\eps, \delta} \transcript{\mechanism^{-x}}{\adversary}$, then $\transcript{\mechanism}{\adversaryDistribution} \approx_{\eps, \delta} \transcript{\mechanism^{-x}}{\adversaryDistribution}$ holds for all distributions $\adversaryDistribution$ over interactive adversaries.
\end{corollary}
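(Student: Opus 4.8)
The plan is to observe that the corollary is an immediate consequence of the joint convexity of $(\eps,\delta)$-indistinguishability (\Cref{fact:joint-convexity}), once one unpacks the semantics of the randomized-adversary transcript. First I would fix a mechanism $\mechanism$, a privacy unit $x \in \cX$, and a distribution $\adversaryDistribution$ over interactive adversaries, and read off from \Cref{alg:general-transcript-supporting-randmization} that generating a transcript against $\adversaryDistribution$ amounts to first sampling a deterministic adversary $\adversary \sim \adversaryDistribution$ and then running the standard interaction of \Cref{alg:general-transcript}. Consequently, for every event $W \subseteq \ResponseSet^*$,
\[
  \transcript{\mechanism}{\adversaryDistribution}(W) \;=\; \sum_{\adversary} \adversaryDistribution(\adversary)\cdot \transcript{\mechanism}{\adversary}(W),
\]
so that $\transcript{\mechanism}{\adversaryDistribution}$ is exactly the $\adversaryDistribution$-mixture of the per-adversary transcript distributions $\transcript{\mechanism}{\adversary}$ (with the sum replaced by an integral if $\adversaryDistribution$ is continuous).

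Next I would set $P_{\adversary} := \transcript{\mechanism}{\adversary}$ and $Q_{\adversary} := \transcript{\mechanism^{-x}}{\adversary}$ and record the two facts needed to invoke \Cref{fact:joint-convexity}. The hypothesis of the corollary gives $P_{\adversary} \approx_{\eps,\delta} Q_{\adversary}$ for every $\adversary$ in the support of $\adversaryDistribution$. The key structural observation is that both $\transcript{\mechanism}{\adversaryDistribution}$ and $\transcript{\mechanism^{-x}}{\adversaryDistribution}$ decompose as mixtures with the \emph{same} weights $\alpha_{\adversary} = \adversaryDistribution(\adversary)$: passing from $\mechanism$ to $\mechanism^{-x}$ only alters how the mechanism processes its database (replacing $D_t$ by $D_t^{-x}$) and leaves the adversary side --- and hence the sampling $\adversary \sim \adversaryDistribution$ --- untouched. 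Thus $\transcript{\mechanism}{\adversaryDistribution} = \sum_{\adversary}\alpha_{\adversary} P_{\adversary}$ and $\transcript{\mechanism^{-x}}{\adversaryDistribution} = \sum_{\adversary}\alpha_{\adversary} Q_{\adversary}$ with identical coefficients.

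Finally I would apply \Cref{fact:joint-convexity} to the two families $\{P_{\adversary}\}$ and $\{Q_{\adversary}\}$ with the common weights $\{\alpha_{\adversary}\}$ to conclude $\transcript{\mechanism}{\adversaryDistribution} \approx_{\eps,\delta} \transcript{\mechanism^{-x}}{\adversaryDistribution}$, which is the claim. I expect the only real obstacle to be the bookkeeping in the first two paragraphs, namely verifying that the randomized transcript is genuinely a mixture over deterministic adversaries with mixing weights independent of the mechanism; this matters because joint convexity requires the \emph{same} coefficients on both sides, and it is exactly the independence of $\adversaryDistribution$ from the mechanism's database-processing that guarantees this. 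Once that coupling is in place, the rest is a direct appeal to the cited fact and requires no further computation.
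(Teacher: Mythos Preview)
Your proposal is correct and matches the paper's own argument essentially verbatim: the paper also sets $P_{\adversary} = \transcript{\mechanism}{\adversary}$ and $Q_{\adversary} = \transcript{\mechanism^{-x}}{\adversary}$ and invokes \Cref{fact:joint-convexity} directly, noting (via \Cref{alg:general-transcript-supporting-randmization}) that both transcript distributions are mixtures over the same weights $\adversaryDistribution(\adversary)$. Your additional remark that the mixing weights are independent of the mechanism side is exactly the point that makes the application legitimate.
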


\begin{algorithm}[t]
    \caption{Interactive Transcript $\transcript{\mechanism}{\adversaryDistribution}$.}
    \label{alg:general-transcript-supporting-randmization}
    \begin{algorithmic}
        \STATE \textbf{Inputs:} $\triangleright$ Interactive mechanism $\mechanism$ with initial state $\mechState_0$,
        \STATE \phantom{\textbf{Inputs:}} $\triangleright$ A distribution of interactive adversaries $\adversaryDistribution$.
        \STATE Sample $\adversary\sim\adversaryDistribution$.
        \RETURN $\transcript{\mechanism}{\adversary}$.
    \end{algorithmic}
\end{algorithm}

\begin{algorithm}[t]
\caption{\IRR $\mathcal{I}_{\varepsilon, \rroutputvalid}$.}
\label{alg:irr}
\begin{algorithmic}
\STATE \textbf{Params:} 
    $\triangleright$ Privacy parameter $\varepsilon \in \R$,
\STATE \phantom{\textbf{Params:}} $\triangleright$ 
    Output set $\rroutputvalid \subseteq \rroutput_1 \times \dots \times \rroutput_i \times \cdots$.
\STATE \textbf{Inputs:} $\triangleright$ 
    State $S$ encoding $i \in \N$ and $\randomizedoutput \in \rroutputvalid \cup \{\perp\}$,
\STATE  \phantom{\textbf{Input:}} $\triangleright$ 
    Set of events $Y \subseteq \cY$ of events,
\STATE  \phantom{\textbf{Input:}} $\triangleright$ 
    Query $q: \cP(\cY) \to \rroutput_i$.
    
\IF{$i = 1$}
    \STATE Set $\randomizedoutput$ to $\perp$ with probability 
        $\frac{e^{\eps} - 1}{e^{\eps} + |\rroutputvalid|- 1}$, 
        otherwise set it to a random sample from $\rroutputvalid$
\ENDIF

\IF{$\randomizedoutput = \perp$}
	\RETURN $(\randomizedoutput, q(Y))$
\ELSE
    \RETURN $(\randomizedoutput, \randomizedoutput_i)$
\ENDIF
\end{algorithmic}
\end{algorithm}

\begin{algorithm}[t]
    \caption{Interactive mechanism 
    $\MER : \StateSet \times \cD \times \QuerySet \to \StateSet \times \Delta(\ResponseSet)$.}
    \label{alg:event-reports-interactive-mechanism}
    \begin{algorithmic}
        \STATE \textbf{Params:} $\triangleright$ Privacy parameter $\eps > 0$,
        \STATE \phantom{\textbf{Params:}} $\triangleright$ 
            Output sets $\{ \rroutputvalid^{(x)} \}_{x \in \cX}$.
        \STATE \textbf{Inputs:} $\triangleright$ State $\mechState$ encoding 
        $i \in \N$ and $\{ \randomizedoutput_x \in \rroutputvalid^{(x)} \cup \{\bot\} \}_{x \in \cX}$,
        \STATE \phantom{\textbf{Inputs:}} $\triangleright$ Database $D \in \cD$,
        \STATE \phantom{\textbf{Inputs:}} $\triangleright$ Queries
            $\{ q_x : \cY^* \to \rroutputvalid^{(x)}_i \}_{x \in \cX}$.

        \FOR{$x \in \cX$}
            \STATE $(s'_x, r_x) \gets \mathcal{I}_{\varepsilon, \rroutputvalid^{(x)}}(
                    i, \randomizedoutput_x, N_D(x), q_x
                )$
        \ENDFOR
        \RETURN $((i, \{s'_x\}_{x \in \cX}),  \{r_x\}_{x \in \cX})$
    \end{algorithmic}
\end{algorithm}

\begin{proof}[Proof of \Cref{theorem:event-level-privacy}]
    The proof consists of two parts: first, we show that $\MER$ is private when $\cX$ has only one element $x$;
    next, we prove that general privacy guarantee follows from this.
    
    Assume that $\cX = \{x\}$.
    Note that state of the mechanism does not change over the course of execution. 
    Let us denote the random variable for this state as $\randomizedoutput$. It is easy to see that for any 
    $\tilde{o} \in \rroutputvalid$,
    \begin{multline*}
        \Pr[\transcript{\MER}{\adversary} = \tilde{o} ~\mid~ \randomizedoutput = \tilde{o}] = \\
            \Pr[\transcript{\MER^{-x}}{\adversary} = \tilde{o} ~\mid~ \randomizedoutput = \tilde{o}] = 1.
    \end{multline*}
    Therefore, 
    \begin{align*}
        \Pr[\transcript{\MER}{\adversary} = \tilde{o}] &\ge \Pr[\randomizedoutput = \tilde{o}] = \frac{1}{e^{\eps} + |\rroutputvalid^{(x)}| - 1} \text{ and} \\
        \Pr[\transcript{\MER^{-x}}{\adversary} = \tilde{o}] &\ge \Pr[\randomizedoutput = \tilde{o}] = \frac{1}{e^{\eps} + |\rroutputvalid^{(x)}| - 1}.
    \end{align*}
    Furthermore, for any $\tilde{o} \in \rroutputvalid$,
    \begin{align*}
       & \Pr[\transcript{\MER}{\adversary} = \tilde{o}] \\
       &  = 
            \Pr[\transcript{\MER}{\adversary} = \tilde{o} \land \randomizedoutput = \perp] + 
            \Pr[\transcript{\MER}{\adversary} = \tilde{o} \land \randomizedoutput \neq \perp] \\
            & \le 
            \Pr[\randomizedoutput = \perp] + \Pr[\randomizedoutput = \tilde{o}] \\
            & = 
            \frac{e^{\eps} - 1}{e^{\eps} + |\rroutputvalid^{(x)}| - 1} + 
            \frac{1}{e^{\eps} + |\rroutputvalid^{(x)}| - 1} = 
            \frac{e^{\eps}}{e^{\eps} + |\rroutputvalid^{(x)}| - 1},
    \end{align*}
    which implies that $\transcript{\MER}{\adversary} \approx_{\eps, 0} \transcript{\MER^{-x}}{\adversary}$.
    
    First, we denote the mechanism $\MER$ operating on a dataset $\cX = \{x\}$ as $\MER^{(x)}$.
    Let us now assume that $\cX > 1$.
    We claim that for each $x \in \cX$, any adversary $\adversary$,
    and any transcript $\Pi$,
    there is a distribution $\adversaryDistribution'$
    over adversaries for 
    $\MER^{(x)}$ and a transcript $\Pi'$ for $\MER^{(x)}$ such that 
    \begin{align*}
        \Pr[\transcript{\MER}{\adversary} = \Pi] & =  \Pr[\transcript{\MER^{(x)}}{\adversaryDistribution'} = \Pi'], \\
        \Pr[\transcript{\MER^{-x}}{\adversary} = \Pi] & =
        \Pr[\transcript{\left(\MER^{(x)}\right)^{-x}}{\adversaryDistribution'} = \Pi'].
    \end{align*}
    Note that this implies that $\MER$ is $\eps$-DP.
    
    Let us now construct $\adversaryDistribution'$. First, we define an adversary $\adversary'_{\{ \randomizedoutput_{x'} \}_{x' \neq x}}$
    that runs $\adversary$ using actual response for $x$ and simulated responses for $x' \neq x$ using $\randomizedoutput_{x'}$
    (note that if the state is fixed, the mechanism is deterministic).
    It is clear that a distribution $\adversaryDistribution'$ that samples first $\{ \randomizedoutput_{x'} \}_{x' \neq x}$ and
    returns the adversary $\adversary'_{ \{\randomizedoutput_{x'} \}_{x' \neq x}}$ satisfies the desired condition.
\end{proof}

\section{Conclusion}\label{sec:discussion}

In this work, we modeled the summary reports in the Attribution Reporting API (ARA) and the Private Aggregation API (PAA), as well as the event-level reports in ARA.
We established formal DP guarantees for these mechanisms, even against the stringent notion of interactive adversaries that can influence the database in subsequent rounds based on responses in previous rounds.

\begin{acks}
    We thank the anonymous reviewers, for their feedback, 
    which significantly improved the clarity of this paper.
    We thank Roxana Geambasu, Pierre Tholoniat for discussions about \cite{tholoniat2024alistair}. 
    We are also grateful to Jolyn Yao and Christina Ilvento for their invaluable contributions, without which this paper might not have been possible.
\end{acks}

\newpage 
\bibliographystyle{ACM-Reference-Format}
\bibliography{main.bbl}

\newpage 

\appendix

\begin{table*}[t]
    \centering
    \begin{tabular}{l|p{13cm}}
        \toprule
        \textbf{Keyword} & \textbf{Meaning} \\
        \midrule
        {\em \Ad} & Advertisement shown on a publisher website. \\
        {\em ad-tech} & The entity that helps advertisers \& publishers with placement and measurement of digital \Ad{}s.\\
        {\em advertiser} & The entity that is paying for the advertisement, e.g. an online shoes shop. \\
        {\em aggregatable report} & An encrypted report that is sent to ad-tech every time a trigger is registered. \\
        {\em ARA} & Attribution Reporting API, that supports generation of {\em summary reports} and {\em event level reports} for attributed conversions. \\
        {\em attribution} & A conversion is attributed to an impression if the ads system beleives that this conversion happened due to this impression. \\
        {\em conversion} & An action on the advertiser website; for example, it could be a purchase. \\
        {\em impression} & An event where a user is exposed to some marketing information; for example, an ad is shown to the user. \\
        {\em key discovery} & The functionality that allows ad-techs to get a summary report without passing a list of keys of interest. \\
        {\em PAA} & Private Aggregation API, that supports generation of {\em summary reports}, corresponding to cross-website events. \\
        {\em publisher} & The entity that hosts the website that displays an advertisement, e.g. a news website.\\
        {\em requerying} & The functionality that allows ad-techs to process the same report multiple times using aggregation service. \\
        {\em shared storage} & The API that allows persisting a cross-web key-storage with read access being restricted to preserve privacy. \\
        {\em source} & The event on publisher website registered by the ad-tech with the browser; in typical use-cases, it corresponds to an impression. The $\sourceKey$ gets used in the generation of the aggregatable report for any trigger that get attributed to this source.\\
        {\em summary report} & The report obtained as a result of aggregating aggregatable reports and adding noise to the result. \\
        {\em trigger} & The event on advertiser website registered by the ad-tech that makes the ARA Client generate an aggregatable report; in typical use-cases it corresponds to conversions.\\
        \bottomrule
    \end{tabular}
    \caption{Glossary of commonly used terminology regarding the Privacy Sandbox.}
    \label{tab:glossary}
\end{table*}

\section{Analysis of (Truncated) Discrete Laplace Mechanism}\label{apx:tdlaplace}

We provide a proof of \Cref{fact:dlaplace-dp} for completeness.
The first component of this proof is the following tail-bound for discrete Laplace distributions.
\begin{lemma}
\label{lemma:truncated-discrete-laplace-tail}
    Let $\tau$ be an integer such that $\tau \ge \log(1 / \delta) / a + \Delta$. 
    Then $\Pr_{X \sim \DLap_{\tau}(a)}[X > \tau - \Delta] \le \delta$.
\end{lemma}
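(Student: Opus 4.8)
The plan is to estimate the tail probability directly from the definition of the truncated distribution. Write $Z_\tau := \sum_{x=-\tau}^{\tau} e^{-a|x|}$ for the normalizing constant, so that $\Pr_{X \sim \DLap_\tau(a)}[X = x] = e^{-a|x|}/Z_\tau$ for every integer $x$ with $|x| \le \tau$. First I would record the elementary observation that $\tau \ge \Delta$: indeed $\log(1/\delta)/a \ge 0$ for $\delta \le 1$, so the hypothesis $\tau \ge \log(1/\delta)/a + \Delta$ forces $\tau \ge \Delta$. Since $\Delta$ is a positive integer, the event $\{X > \tau - \Delta\}$ is exactly $\{X \in \{\tau-\Delta+1, \dots, \tau\}\}$, and all of these values are strictly positive, hence
\[
\Pr_{X \sim \DLap_\tau(a)}[X > \tau - \Delta] \;=\; \frac{1}{Z_\tau}\sum_{x=\tau-\Delta+1}^{\tau} e^{-ax}.
\]

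Next I would factor $e^{-a(\tau-\Delta)}$ out of the numerator, writing $\sum_{x=\tau-\Delta+1}^{\tau} e^{-ax} = e^{-a(\tau-\Delta)}\sum_{m=1}^{\Delta} e^{-am}$, and lower-bound the denominator by the same truncated geometric sum using $\tau \ge \Delta$: $Z_\tau \ge \sum_{x=1}^{\tau} e^{-ax} \ge \sum_{m=1}^{\Delta} e^{-am}$. Cancelling the common positive factor $\sum_{m=1}^{\Delta} e^{-am}$ yields $\Pr[X > \tau - \Delta] \le e^{-a(\tau-\Delta)}$. Finally, the hypothesis $\tau \ge \log(1/\delta)/a + \Delta$ rearranges to $a(\tau-\Delta) \ge \log(1/\delta)$, so $e^{-a(\tau-\Delta)} \le \delta$, which completes the argument.

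I do not expect any genuine obstacle here; the proof is a one-line geometric-series estimate. The only points that need care are the integrality of $\Delta$ (so that ``strictly greater than $\tau-\Delta$'' is precisely the integer interval $\{\tau-\Delta+1,\dots,\tau\}$) and the verification that $\tau \ge \Delta$ (so that this interval lies in the part of the support where $|x| = x$, and so that the lower bound on $Z_\tau$ is legitimate). One could sharpen the conclusion slightly — e.g.\ to $e^{-a(\tau-\Delta+1)}$ by also keeping the $x=0$ term or the factor $2$ in $Z_\tau$ — but this refinement is unnecessary, since the crude bound $e^{-a(\tau-\Delta)} \le \delta$ already exactly matches the stated hypothesis on $\tau$.
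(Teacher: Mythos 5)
Your proposal is correct and follows essentially the same route as the paper: both bound the tail as a ratio of geometric sums and arrive at the identical intermediate estimate $e^{-a(\tau-\Delta)} \le \delta$. The only (cosmetic) difference is the middle step — you factor out $e^{-a(\tau-\Delta)}$ and cancel a common finite sum $\sum_{m=1}^{\Delta}e^{-am}$ against a lower bound on the normalizer, whereas the paper extends both numerator and denominator to infinite geometric series; your version also carefully records the $\tau \ge \Delta$ and integrality points that the paper leaves implicit.
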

\begin{proof}
    \begin{align*}
        \Pr_{X \sim \DLap_{\tau}(a)}[X > \tau] &~=~ \frac{\sum_{x=\tau - \Delta + 1}^\tau e^{-a x}}{\sum_{x=-\tau}^\tau e^{-a |x|}}\\
        &~\le~ \frac{\sum_{x=\tau - \Delta + 1}^\tau e^{-a x}}{\sum_{x=0}^\tau e^{-a x}}\\
        &~\le~ \frac{\sum_{x=\tau - \Delta + 1}^\infty e^{-a x}}{\sum_{x=0}^\infty e^{-a x}}\\
        &~\le~ e^{-a(\tau - \Delta)} ~\le~ \delta.
    \end{align*}
    where the last inequality uses that $\tau \ge \log(1 / \delta) / a + \Delta$.
\end{proof}

\begin{proof}[Proof of \Cref{fact:dlaplace-dp}]
    First, we observe that by shifting the vectors $u$ and $v$, we can assume without loss of generality that $u = 0$.
    
    First, we consider the case of (untruncated) discrete Laplace noise. Let $P$ and $Q$ be the distributions of $u + \zeta$ and $v+ \zeta$ respectively for $\zeta \sim \DLap(a)^{\otimes d}$. In other words, for any $w \in \Z^d$, it holds that
    \[
    P(w) = \frac{1}{Z} e^{-a\|u-w\|_1}
    \qquad \text{and} \qquad
    Q(w) = \frac{1}{Z} e^{-a\|v-w\|_1}
    \]
    where $Z = (\frac{e^a + 1}{e^a - 1})^d$. Thus, we have
    \[
    \frac{P(w)}{Q(w)} = e^{-a (\|u-w\|_1 - \|v - w\|_1)}
    \]
    and it is thus easy to see that
    \[
    e^{-a\Delta} \le e^{-a\|u - v\|_1} \le \frac{P(w)}{Q(w)} \le e^{a \|u - v\|_1} \le e^{a\Delta}
    \]
    and thus, $P \approx_{\eps, 0} Q$ when $a = \eps / \Delta$.
    
    Next, moving to the case of truncated discrete Laplace noise, let $P$ and $Q$ be the distributions of $u + \zeta$ and $v+ \zeta$ respectively for $\zeta \sim \DLap_{\tau}(a)^{\otimes d}$. In particular, we have
    \[
        P(w) = \begin{cases}
        \frac{1}{Z} e^{-a \|u - w\|_1} & \text{if } \|u - w\|_\infty \le \tau\\
        0 & \text{if } \|u - w\|_\infty > \tau
        \end{cases}
    \]
    and similarly for $Q$, where $Z = \sum_{x=-\tau}^\tau e^{-a|x|}$. Let $S := \{ w : \|u - w\|_\infty \le \tau \text{ and } \|v - w\|_\infty \le \tau \}$. Similar to the case of untruncated case above, it follows that for all $w \in S$, it holds that $e^{-a\Delta} \le P(w)/Q(w) \le e^{a\Delta}$. Thus, for $a = \eps/\Delta$, it holds for all $E \subseteq \Z^d$ that
    \begin{align*}
        P(E) &~=~ P(E \cap S) + P(E \smallsetminus S)\\
        &~\le~ e^{\eps} Q(E \cap S) + P(E \smallsetminus S)\\
        &~\le~ e^{\eps} Q(E) + P(\Z^d \smallsetminus S)
    \end{align*}
    Thus, $P \approx_{\eps, \delta} Q$ where $\delta := P(\Z^d \smallsetminus S)$. To complete the proof, we need to show that when $\tau \ge \Delta (1 + \log(s / \delta) / \eps)$, it holds that $P(\Z^d \smallsetminus S) \le \delta$; recall that $u$ and $v$ differ on $s$ coordinates. We have
    \begin{align}
    P(\Z^d \smallsetminus S)
    &~=~ 1 - P(S)\nonumber\\
    &~=~ 1 - \prod_{i=1}^d \Pr_{w_i \sim u_i + \DLap_{\tau}(a)} [|w_i - v_i| \le \tau]\nonumber\\
    &\textstyle~\le~ s \cdot \Pr_{X \sim \DLap_{\tau}(a)}[X > \tau - \Delta]\label{eq:dlaplace-union-bound}
    \end{align}
    where, we use that when $u_i = v_i$, we have
    $$\Pr_{w_i \sim u_i + \DLap_{\tau}(a)} [|w_i - v_i| \le \tau] = 1$$
    and when $u_i \ne v_i$, we have
    \begin{align*}
    \Pr_{w_i \sim u_i + \DLap_{\tau}(a)} [|w_i - v_i| \le \tau]
    &~=~ \Pr_{X \sim \DLap_{\tau}(a)} [|X + u_i - v_i| \le \tau]\\
    &~\ge~ 1 - \Pr_{X \sim \DLap_{\tau}(a)}[X > \tau - \Delta].
    \end{align*}
    Note that \Cref{lemma:truncated-discrete-laplace-tail} implies that
    \[
        \Pr_{X \sim \DLap_{\tau}(a)}[X > \tau - \Delta] \le \delta / s.
    \]
    since $\tau \ge \Delta(1 + \log(s / \delta) / \eps)$. 
    Combining with \Cref{eq:dlaplace-union-bound}, we get that $P(\Z^d \smallsetminus S) \le \delta$, thereby completing the proof.
\end{proof}

\section{Extension to other notions of DP}\label{sec:AzCDP}

While we primarily studied $(\eps, \delta)$-DP notion in this paper, our definitions and techniques readily extend to any other notion of DP that admits privacy filters. In particular, we could consider the following notion of Approximate zero Concentrated DP.

\begin{definition}[\cite{bun16concentrated}]\label{def:approx-zCDP}
    Two distributions $P$, $Q$ are said to be 
    \emph{$(\rho, \delta)$-AzCDP-indistinguishable}\footnote{%
        $(\rho, \delta)$-AzCDP is referred to as \emph{$\delta$-approximate $\rho$-zCDP} in \cite{bun16concentrated}.
    }
    if there exist events $W$ and $W'$ such that
    \begin{gather*}
        P(W) \ge 1 - \delta,\quad Q(W') \ge 1 - \delta, \\
        \divergence{\alpha}\left(P|_W ~\|~ Q|_{W'}\right) \le \rho \alpha, \quad \text{ and } \quad
        \divergence{\alpha}\left(Q|_{W'} ~\|~ P|_W\right) \le \rho \alpha,
    \end{gather*}
    where for any $\alpha > 1$, 
    $\divergence{\alpha}(U ~\|~ V) := \frac{1}{\alpha - 1} \log\left(\int U(x)^\alpha V(x)^{1 - \alpha} \mathop{dx}\right)$ 
    denotes the $\alpha$-R\'{e}nyi divergence between $U$ and $V$.
\end{definition}

All the proof techniques we applied for $(\eps, \delta)$-DP also extend to hold for $(\rho, \delta)$-AzCDP. In particular, we have to rely on the privacy filter $\phi_{\rho, \delta}$ for AzCDP that is defined similarly.
\[
    \filter_{\rho, \delta}((\rho_1, \delta_1), \dots, (\rho_n, \delta_n)) := \begin{cases}
        \thumbsup & \text{if } \sum\limits_{i = 1}^n \rho_i \le \rho \text{ \& } \sum\limits_{i = 1}^n \delta_i \le \delta\\
        \thumbsdown & \text{otherwise}.
    \end{cases}
\]

\begin{lemma}[\cite{whitehouse23fully}]\label{lem:filters_2}
    For all $\rho \ge 0$ and $\delta \in [0, 1]$, the universal interactive mechanism $\mechanism_{\filter_{\rho, \delta}}$ satisfies $(\rho, \delta)$-AzCDP.
\end{lemma}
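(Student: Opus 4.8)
The plan is to establish \Cref{lem:filters_2} as the AzCDP analog of \Cref{lem:filters}, following the fully adaptive composition framework of~\cite{whitehouse23fully}. As in the proof of \Cref{lem:filters}, it suffices to fix a stable adversary $\adversary$ and a database $U \subseteq \cX$, and to compare the transcript laws $P := \transcript{\mechanism_{\filter_{\rho,\delta}}}{\adversary}$ and $Q := \transcript{(\mechanism_{\filter_{\rho,\delta}})^{-x}}{\adversary}$ obtained by running $U$ versus $U^{-x}$. By the definition of $\filter_{\rho,\delta}$, along every transcript the released rounds (those returning $\thumbsup$) carry per-round parameters $(\rho_t,\delta_t)$ with $\sum_t \rho_t \le \rho$ and $\sum_t \delta_t \le \delta$, while blocked rounds ($\thumbsdown$) output $\bot$ and charge $\mathbf{0}$; thus these two running sums are exactly what the filter caps. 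The goal is then to exhibit conditioning events $W, W'$ witnessing $(\rho,\delta)$-AzCDP for $P,Q$ in the sense of \Cref{def:approx-zCDP}.

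First I would treat the pure-zCDP skeleton (the $\delta = 0$ case) via a privacy-loss supermartingale. Let $\{\mathcal{F}_t\}$ be the filtration generated by the first $t$ responses, let $L_t := \log\bigl(P(r_1,\dots,r_t)/Q(r_1,\dots,r_t)\bigr)$ be the partial log-likelihood ratio, and let $\rho_t$ be the history-dependent R\'enyi budget charged at round $t$. For each fixed order $\alpha > 1$, the process $M_t := \exp\!\bigl((\alpha-1)(L_t - \alpha \sum_{s \le t}\rho_s)\bigr)$ is a supermartingale under $Q$ with respect to $\{\mathcal{F}_t\}$: its one-step conditional expectation is controlled by the per-round guarantee $\divergence{\alpha}(\cdot \,\|\, \cdot) \le \rho_t \alpha$ that each released query satisfies, and blocked rounds leave $M_t$ unchanged. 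Applying optional stopping (Ville's inequality) at the adaptive time $T$ when the filter first blocks, together with the cap $\sum_{s \le T}\rho_s \le \rho$ enforced by $\filter_{\rho,\delta}$, yields $\mathbb{E}_Q[e^{(\alpha-1)L_T}] \le e^{(\alpha-1)\alpha\rho}$, i.e.\ $\divergence{\alpha}(P \,\|\, Q) \le \rho\alpha$ for all $\alpha$; the symmetric bound is identical.

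Next I would fold in the $\delta$-slack. Since each released query satisfies $(\rho_t,\delta_t)$-AzCDP, \Cref{def:approx-zCDP} supplies per-round events $W_t, W'_t$ (defined adaptively from the history) on which the round behaves as a genuine $\rho_t$-zCDP release and whose complements have mass at most $\delta_t$. Taking $W := \bigcap_t W_t$ and $W' := \bigcap_t W'_t$ over the released rounds up to the stopping time, a union bound gives $P(W^c) \le \sum_t \delta_t \le \delta$ and $Q(W'^c) \le \delta$, so $P(W) \ge 1 - \delta$ and $Q(W') \ge 1 - \delta$ as required. Re-running the supermartingale argument on the conditioned laws $P|_W$ and $Q|_{W'}$ then gives $\divergence{\alpha}(P|_W \,\|\, Q|_{W'}) \le \rho\alpha$ and its reverse, completing the verification of $(\rho,\delta)$-AzCDP.

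The main obstacle is the adaptivity, and it enters in two coupled ways. First, the budgets $\rho_t,\delta_t$ and the stopping time $T$ are themselves functions of the observed transcript, so the supermartingale and its optional-stopping bound must be set up against the history filtration rather than a fixed schedule; this is precisely the content of the fully adaptive composition theorem of~\cite{whitehouse23fully}, which I would invoke for the martingale step. Second, conditioning on the good events $W_t$ can in principle destroy the martingale property that drives the zCDP core, so the delicate point is to interleave the two arguments, charging the $\delta_t$ mass to the bad events while keeping the tilted likelihood-ratio process a supermartingale on the good part. Making this interleaving rigorous, rather than the routine union-bound and optional-stopping calculations, is where the real work lies.
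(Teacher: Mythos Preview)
The paper does not prove this lemma at all: it is stated with attribution to \cite{whitehouse23fully} and treated as a black-box result, just as \Cref{lem:filters} is cited from \cite{rogers16privacy} without proof. There is therefore no ``paper's own proof'' to compare against.

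Your proposal goes well beyond what the paper does, by sketching the argument that underlies the cited reference: a privacy-loss supermartingale for the tilted likelihood ratio, controlled by the per-round R\'enyi bound and terminated via Ville/optional stopping against the adaptive budget cap, with the $\delta$ slack absorbed by conditioning on per-round good events and union-bounding their complements. This is indeed the shape of the fully adaptive composition proof in \cite{whitehouse23fully}, and you correctly flag the one genuinely delicate point (that conditioning on the good events can interfere with the martingale property). One minor inaccuracy: in the mechanism $\mechanism_{\filter}$ as defined here, a blocked round appends $\mathbf{0}$ to the state rather than freezing it, so later rounds with small enough parameters can still be released; the ``stopping time $T$ when the filter first blocks'' is not quite the right object, and the argument should instead track the running sum over all released rounds throughout the entire interaction. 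This does not change the substance of the supermartingale bound.
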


AzCDP is useful in performing privacy accounting of the Gaussian mechanism, where using standard $(\eps, \delta)$-DP notion results in sub-optimal privacy guarantees under composition.%
	
\end{document}